%% LyX 2.1.2.2 created this file.  For more info, see http://www.lyx.org/.
%% Do not edit unless you really know what you are doing.
\documentclass[11pt,english]{article}
\usepackage{lmodern}

\usepackage[T1]{fontenc}
\usepackage[latin9]{inputenc}
\usepackage{geometry}
\geometry{verbose,tmargin=1in,bmargin=1in,lmargin=1in,rmargin=1in}
\usepackage{array}
\usepackage{amsthm}
\usepackage{amsmath}
\usepackage{amssymb}

\makeatletter

%%%%%%%%%%%%%%%%%%%%%%%%%%%%%% LyX specific LaTeX commands.
%% Because html converters don't know tabularnewline
\providecommand{\tabularnewline}{\\}

%%%%%%%%%%%%%%%%%%%%%%%%%%%%%% Textclass specific LaTeX commands.
\usepackage{enumitem}		% customizable list environments
      % auxiliary length 
\theoremstyle{plain}
\newtheorem{thm}{\protect\theoremname}
  \theoremstyle{definition}
  \newtheorem{defn}[thm]{\protect\definitionname}
  \theoremstyle{plain}
  \newtheorem{prop}[thm]{\protect\propositionname}
  \theoremstyle{plain}
  \newtheorem{fact}[thm]{\protect\factname}
 \theoremstyle{definition}
 \newtheorem*{defn*}{\protect\definitionname}
  \theoremstyle{plain}
  \newtheorem{lem}[thm]{\protect\lemmaname}
  \theoremstyle{plain}
  \newtheorem{cor}[thm]{\protect\corollaryname}
  \theoremstyle{remark}
  \newtheorem*{rem*}{\protect\remarkname}
  \theoremstyle{plain}
  \newtheorem{claim}[thm]{\protect\claimname}
  \theoremstyle{plain}
  \newtheorem*{thm*}{\protect\theoremname}

%%%%%%%%%%%%%%%%%%%%%%%%%%%%%% User specified LaTeX commands.
%\setlist{nolistsep}
%\renewcommand{\baselinestretch}{0.95}
%\AtBeginDocument{%
%\setlength{\abovedisplayskip}{0.5\baselineskip}%
%\setlength{\belowdisplayskip}{0.5\baselineskip}%
%\setlength{\abovedisplayshortskip}{0\baselineskip}%
%\setlength{\belowdisplayshortskip}{0.5\baselineskip}%
%}
%\setlength{\baselineskip}{1mm}
%\setlength{\parskip}{0pt}
%\setlength{\parsep}{0pt}

\makeatother

\usepackage{babel}
  \providecommand{\claimname}{Claim}
  \providecommand{\corollaryname}{Corollary}
  \providecommand{\definitionname}{Definition}
  \providecommand{\factname}{Fact}
  \providecommand{\lemmaname}{Lemma}
  \providecommand{\propositionname}{Proposition}
  \providecommand{\remarkname}{Remark}
  \providecommand{\theoremname}{Theorem}
\providecommand{\theoremname}{Theorem}

\begin{document}
\global\long\def\prob{\mathbb{P}}
\global\long\def\E{\mathbb{E}}
\global\long\def\Bin{\mathrm{Bin}}

\title{On Active and Passive Testing}

%% TODO: fix Rani's affiliation?
\author{Noga Alon%
\thanks{Sackler School of Mathematics and Blavatnik School of
Computer Science, Tel Aviv University, Tel Aviv 69978, Israel and
School of Mathematics, Institute for Advanced Study, Princeton, NJ 08540.
Email: \texttt{nogaa@tau.ac.il}.
Research supported in part by BSF grant 2012/107,
by ISF grant 620/13, by the Israeli I-Core program and by the Fund for
Mathematics.}
\and Rani Hod%
\thanks{School of Mathematics, Georgia Tech, 686 Cherry st, Atlanta, GA 30332.
E-mail: \texttt{rani.hod@math.gatech.edu}. %
} \and Amit Weinstein%
\thanks{Blavatnik School of Computer Science, Tel Aviv University, Tel Aviv
69978, Israel. Email: \texttt{amitw@tau.ac.il}. Research supported
in part by the Israeli 
I-Core program.%
}}
\maketitle
\begin{abstract}
%NA
Given a property of Boolean functions, what is the minimum number
of queries required to determine with high probability if an input
function satisfies this property or is ``far'' from satisfying it?  
This is a fundamental question in Property Testing, where traditionally 
the testing algorithm is allowed to pick its queries among the entire 
set of inputs. Balcan, Blais, Blum and Yang have
recently suggested to restrict the tester to take its queries from
a smaller random subset of polynomial size
of the inputs. This model is called
\emph{active testing}, and in the extreme case when the size of the 
set we can query from is exactly the number of queries 
performed it is known as \emph{passive testing}.

We prove that passive or active testing of 
$k$-linear functions (that is, sums of $k$ variables among $n$ 
over $\mathbb{Z}_{2}$)
requires $\Theta(k\log n)$ queries, assuming $k$ is not too
large. This extends the case $k=1$, (that is, dictator functions),
analyzed by Balcan et. al.

We also consider other classes of functions including
low degree polynomials, juntas, and partially symmetric functions.
Our methods combine algebraic, combinatorial, and 
probabilistic techniques,
including the Talagrand concentration inequality and the Erd\H{o}s--Rado
theorem on $\Delta$-systems.
\end{abstract}

\section{Introduction}

Property testing considers the following general problem: given a
property $\mathcal{P}$, identify the minimum number of queries required
to determine with high probability whether an input object has the
property $\mathcal{P}$ or whether it is \textquotedblleft far\textquotedblright{}
from $\mathcal{P}$. This question was first formalized by Rubinfeld
and Sudan~\cite{RS96} in the context of Boolean functions.
\begin{defn}[\cite{RS96}]
\label{def:membership-tester}Let $\mathcal{P}$ be a family of Boolean
functions and let $\epsilon>0$. A \emph{$q$-query $\epsilon$-tester} for
$\mathcal{P}$ is a randomized algorithm that queries an unknown function
$f:\mathbb{Z}_{2}^{n}\to\mathbb{Z}_{2}$ on $q$ inputs of its choice
and
\begin{enumerate}[label=(\roman*)]
\item Accepts with probability at least 2/3 when $f\in\mathcal{P}$; 
\item Rejects with probability at least 2/3 when $f$ is $\epsilon$-far from
$\mathcal{P}$, where $f$ is $\epsilon$-far from $\mathcal{P}$
if ${\rm dist}(f,g):=\left|\left\{ x\in\mathbb{Z}_{2}^{n}\mid f(x)\neq g(x)\right\} \right|\geq\epsilon2^{n}$
holds for every $g\in\mathcal{P}$.
\end{enumerate}
We denote the minimal $q$ such that a $q$-query $\epsilon$-tester
for $\mathcal{P}$ exists by $Q_{\epsilon}\left(\mathcal{P}\right)$.
\end{defn}
The main line of research in many works on property testing is to
characterize $Q_{\epsilon}\left(\mathcal{P}\right)$ for various properties~$\mathcal{P}$.
An interesting distinction is identifying properties for which $Q_{\epsilon}\left(\mathcal{P}\right)$
is constant (i.e., independent of $n$). For instance, linearity can
be tested in a constant number of queries~\cite{BLR93}; more generally,
testing if a Boolean function is a polynomial of constant degree can
be performed with a constant number of queries~\cite{AKKLR03,BFL91,BKSSZ10,RS96}.
Testing whether a function depends only on a constant number of its
input variables (that is, if a function is a junta) can also be done
with a constant number of queries~\cite{Blais08,Blais09,FKRSS04}.

\medskip{}

In the definition above the algorithm can pick its $q$ queries in
the entire set $\mathbb{Z}_{2}^{n}$. Balcan, Blais, Blum, and
Yang~\cite{BBBY13}
suggested to restrict the tester to take its queries from a smaller,
typically random, subset $U\subseteq\mathbb{Z}_{2}^{n}$. This model
is called active testing, in resemblance of active learning (see,
e.g.,~\cite{CAL94}). Active testing gets more difficult as the size
of $U$ decreases, and the extreme case is when $U$ is a set of $q$
random points (so the algorithm actually has no choice). This is known
as passive testing, or testing from random examples%
\footnote{Although the examples could be drawn from a general probability distribution,
in this work we focus on the uniform distribution.%
}, and was studied in~\cite{GGR98,KR00}. Formally, the next definition
from~\cite{BBBY13} extends Definition~\ref{def:membership-tester}
to active and passive testers.
\begin{defn}
Let $\mathcal{P}$ be a family of Boolean functions and let $\epsilon>0$.
A \emph{$u$-sample $q$-query $\epsilon$-tester} for $\mathcal{P}$ is
a randomized algorithm that, given a subset $U\subseteq\mathbb{Z}_{2}^{n}$
of size $\left|U\right|=u$, drawn uniformly at random, queries an
unknown function $f:\mathbb{Z}_{2}^{n}\to\mathbb{Z}_{2}$ on $q$
inputs from $U$ and
\begin{enumerate}[label=(\roman*)]
\item Accepts with probability at least 2/3 when $f\in\mathcal{P}$; 
\item Rejects with probability at least 2/3 when $f$ is $\epsilon$-far from
$\mathcal{P}$.
\end{enumerate}

%NA
The set $U$ may be chosen with or without repetitions. For our 
purpose these two options will be equivalent, as in the
parameters considered here the probability of a 
repetition is negligible.
\vspace{0.3cm}

We denote by $Q_{\epsilon}^{a}\left(\mathcal{P},u\right)$ the minimal
$q$ such that a $u$-sample $q$-query $\epsilon$-tester for $\mathcal{P}$
exists 
%NA
($\infty$ if $u$ queries do not suffice), 
and by $Q_{\epsilon}^{p}\left(\mathcal{P}\right)$ the minimal
$q$ such that a $q$-sample $q$-query $\epsilon$-tester (i.e.,
a passive $\epsilon$-tester) for $\mathcal{P}$ exists.
\end{defn}
We are usually interested in $\mathrm{poly}\left(n\right)$-sample
testers; for simplicity, we omit the sample size $u$ from our notation
when this is the case.\medskip{}

%NA

The following inequality from~\cite{BBBY13} shows the relation between
the query complexity of the different testing models.
\begin{prop}[{\cite[Theorem A.4.]{BBBY13}}]
\label{prop:classic-active-passive}For every property $\mathcal{P}$
and for every $\epsilon>0$, $Q_{\epsilon}\left(\mathcal{P}\right)\le Q_{\epsilon}^{a}\left(\mathcal{P}\right)\le Q_{\epsilon}^{p}\left(\mathcal{P}\right)$.
\end{prop}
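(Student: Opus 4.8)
The plan is to prove the two inequalities separately, each by a direct simulation; neither direction requires anything beyond unwinding the definitions plus one elementary distributional fact.

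\emph{The inequality $Q_\epsilon(\mathcal P)\le Q^a_\epsilon(\mathcal P)$.} Given an optimal active $u$-sample $q$-query $\epsilon$-tester $A$ for $\mathcal P$, I would build a classic $q$-query $\epsilon$-tester as follows: use the algorithm's own internal randomness to draw a set $U\subseteq\mathbb Z_2^n$ of size $u$ uniformly at random --- this step costs no queries to $f$ --- and then run $A$ on this $U$, answering each of the (at most $q$) queries $A$ makes, all of which lie in $U\subseteq\mathbb Z_2^n$, by querying the unknown $f$ at that point. The joint distribution of $U$, the coin tosses, and the query answers is identical to that in a genuine run of $A$, so the output distribution is unchanged; in particular the acceptance and rejection guarantees are inherited verbatim, and the query count is still $q$. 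Hence $Q_\epsilon(\mathcal P)\le q=Q^a_\epsilon(\mathcal P)$.

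\emph{The inequality $Q^a_\epsilon(\mathcal P)\le Q^p_\epsilon(\mathcal P)$.} Set $q=Q^p_\epsilon(\mathcal P)$ and fix an optimal passive tester, i.e.\ a $q$-sample $q$-query $\epsilon$-tester. For any $u\ge q$ I would turn it into a $u$-sample $q$-query active tester: on receiving the uniformly random set $U$ of size $u$, draw a subset $U'\subseteq U$ of size $q$ uniformly at random, query $f$ on all points of $U'$, and hand the resulting labeled sample to the passive tester. The only thing to check is that $U'$ has the right distribution, namely that it is a uniformly random size-$q$ subset of $\mathbb Z_2^n$; this is a one-line computation using the identity $\binom{N}{u}\binom{u}{q}=\binom{N}{q}\binom{N-q}{u-q}$ with $N=2^n$ (equivalently: every size-$q$ set lies in the same number of size-$u$ sets, so conditioning on being contained in a uniform size-$u$ set keeps all size-$q$ sets equiprobable). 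Since the passive tester then sees exactly the input distribution it was designed for, its guarantees carry over unchanged, giving $Q^a_\epsilon(\mathcal P,u)\le Q^p_\epsilon(\mathcal P)$ for every $u\ge Q^p_\epsilon(\mathcal P)$; taking $u=\mathrm{poly}(n)$ (legitimate in the regime where $Q^p_\epsilon(\mathcal P)$ is polynomial, which is the one of interest) yields the stated bound with the convention that the sample size is suppressed.

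Both steps are essentially bookkeeping; the one place that needs a genuine (if trivial) argument rather than a definitional unwinding is the claim that subsampling a uniform size-$u$ set produces a uniform size-$q$ set, so that is the ``hard part''. I do not anticipate any real obstacle.
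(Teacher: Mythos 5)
Your proof is correct: both simulations (a classic tester drawing $U$ internally and running the active tester, and an active tester uniformly subsampling $q$ of its $u$ points to feed the passive tester) are exactly the standard argument, and you correctly identify the one non-trivial point, namely that a uniform $q$-subset of a uniform $u$-subset of $\mathbb{Z}_2^n$ is itself a uniform $q$-subset. The paper gives no proof of this proposition --- it is imported verbatim from~\cite[Theorem A.4]{BBBY13} --- so there is nothing to compare against beyond noting that your caveat about needing $u\ge Q^p_{\epsilon}(\mathcal{P})$ (covered by the paper's $\mathrm{poly}(n)$-sample convention) is handled appropriately.
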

To provide a simple upper bound on the query complexity of passive
testing, we refer to the more difficult problem of proper passive
learning. The most common model of passive learning is PAC-learning,
introduced by Valiant~\cite{Valiant84}.
\begin{defn}
Let $\mathcal{P}$ be a family of Boolean functions. A \emph{$q$-query
$\epsilon$-learning algorithm} for $\mathcal{P}$ is a randomized
algorithm that, given $q$ random queries from an unknown function
$f\in\mathcal{P}$, outputs a Boolean function $g:\mathbb{Z}_{2}^{n}\to\mathbb{Z}_{2}$
such that $g$ is $\epsilon$-close to $f$ with probability at least
$2/3$ (the underlying probability space is the random queries and
the coin tosses of the algorithm). The algorithm is called \emph{proper}
if it always returns some $g\in\mathcal{P}$. We denote the minimal
$q$ such that a proper $q$-query $\epsilon$-learning algorithm
for $\mathcal{P}$ exists by $Q_{\epsilon}^{\ell}\left(\mathcal{P}\right)$.
\end{defn}
The number of queries needed to properly learn a Boolean function
essentially bounds from above the number of queries needed to test
it; given the output of a proper learning algorithm, it remains to
verify that the input function is indeed close to it. More formally,
we have the following proposition.
\begin{prop}[{\cite[Proposition 3.1.1]{GGR98}}]
\label{prop:passive-testing-upper-bound-by-learning}For every property
$\mathcal{P}$ and for every $\epsilon>0$, $Q_{\epsilon}^{p}\left(\mathcal{P}\right)\le Q_{\epsilon/2}^{\ell}\left(\mathcal{P}\right)+O\left(1/\epsilon\right)$.
\qed
\end{prop}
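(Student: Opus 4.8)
The plan is to turn a proper passive $(\epsilon/2)$-learner for $\mathcal{P}$ into a passive $\epsilon$-tester by the ``learn, then verify'' strategy. Set $m=\Theta(1/\epsilon)$ (with a sufficiently large hidden constant) and take $q=Q_{\epsilon/2}^{\ell}\left(\mathcal{P}\right)+m$ random examples $\left(x_{1},f(x_{1})\right),\dots,\left(x_{q},f(x_{q})\right)$. Feed the first $Q_{\epsilon/2}^{\ell}\left(\mathcal{P}\right)$ of them to the proper learner, obtaining some hypothesis $g\in\mathcal{P}$, and use the remaining $m$ examples to form the empirical estimate $\hat{d}=\frac{1}{m}\left|\left\{ i:f(x_{i})\neq g(x_{i})\right\} \right|$ of $\mathrm{dist}(f,g)/2^{n}$; the tester accepts iff $\hat{d}<\tfrac{3}{4}\epsilon$. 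Since the two batches are disjoint and the examples are i.i.d.\ uniform, the $m$ verification examples are independent of $g$.

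Correctness splits into two cases. If $f$ is $\epsilon$-far from $\mathcal{P}$, then because the learner is \emph{proper} we still have $g\in\mathcal{P}$, so $\prob_{x}[f(x)\neq g(x)]=\mathrm{dist}(f,g)/2^{n}\geq\epsilon$ for whatever $g$ it returns; a multiplicative Chernoff bound gives $\prob[\hat{d}<\tfrac{3}{4}\epsilon]\leq\exp\left(-\Omega(\epsilon m)\right)\leq1/6$, so the tester rejects with probability at least $5/6$. If $f\in\mathcal{P}$, then with probability at least $2/3$ the learner returns $g$ that is $(\epsilon/2)$-close to $f$, i.e.\ $\prob_{x}[f(x)\neq g(x)]\leq\epsilon/2$; conditioned on this, another multiplicative Chernoff bound gives $\prob[\hat{d}\geq\tfrac{3}{4}\epsilon]\leq\exp\left(-\Omega(\epsilon m)\right)\leq1/6$, so the tester accepts. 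The reason the verification needs only $O(1/\epsilon)$ examples (rather than $O(1/\epsilon^{2})$) is that the threshold $\tfrac{3}{4}\epsilon$ differs from the two relevant means by \emph{constant} factors while the means are $\Theta(\epsilon)$, so the multiplicative form of the bound applies.

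The one point that needs care is the probability bookkeeping in the completeness case: with the learner succeeding only with probability $2/3$, the extra $1/6$ failure probability of the verification step would drag the overall acceptance probability below $2/3$. The remedy is standard: first amplify the learner's success probability from $2/3$ to $5/6$ by running it a constant number of times on independent sub-batches and using a short validation sample of size $O(1/\epsilon)$ to output the $g_{j}\in\mathcal{P}$ with the smallest empirical distance to $f$ (equivalently, one invokes the learner with failure probability $1/6$, which affects the sample complexity only by a constant factor and an additive $O(1/\epsilon)$, immaterial for the asymptotic regime considered here). After amplification, a union bound over the learner-failure event ($\leq1/6$) and the verification-failure event ($\leq1/6$) yields acceptance probability at least $2/3$ when $f\in\mathcal{P}$, while the $f$-far analysis already gives rejection probability at least $5/6$. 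Adding the sample sizes gives $Q_{\epsilon}^{p}\left(\mathcal{P}\right)\leq Q_{\epsilon/2}^{\ell}\left(\mathcal{P}\right)+O\left(1/\epsilon\right)$, and this amplification-versus-additivity trade-off is the main thing to get right.
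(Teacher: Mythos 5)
Your proposal is correct and is exactly the learn-then-verify reduction that the paper has in mind (it cites \cite{Ron08} and offers only a one-sentence sketch): run the proper $(\epsilon/2)$-learner, then estimate $\mathrm{dist}(f,g)$ on $O(1/\epsilon)$ fresh samples, using properness to guarantee $g\in\mathcal{P}$ so that the soundness case follows from a multiplicative Chernoff bound. The only caveat is the one you already flag yourself: amplifying the learner's confidence from $2/3$ to $5/6$ makes the learning term $O\bigl(Q_{\epsilon/2}^{\ell}(\mathcal{P})\bigr)$ rather than literally $Q_{\epsilon/2}^{\ell}(\mathcal{P})$, so the coefficient-one form of the statement implicitly presupposes a learner at the higher confidence level; since the paper only ever applies the bound up to constant factors, this is harmless.
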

This proposition is often used together with the following known upper
bound.
\begin{fact}
\label{fact:simple-PAC-UB}For every family of Boolean functions $\mathcal{P}$,
$Q_{\epsilon}^{\ell}(\mathcal{P})=O(\tfrac{1}{\epsilon}\log|\mathcal{P}|)$.
\qed
\end{fact}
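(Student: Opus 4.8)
The plan is to exhibit a proper passive learner based on \emph{consistency} (a.k.a.\ Occam's razor / empirical risk minimization): draw $q$ uniformly random labeled examples $\left(x_{1},f(x_{1})\right),\dots,\left(x_{q},f(x_{q})\right)$ from the unknown target $f\in\mathcal{P}$, and output any hypothesis $g\in\mathcal{P}$ that agrees with $f$ on all $q$ sampled points. Such a $g$ always exists, since $f$ itself is consistent with the sample, and the output is by construction a member of $\mathcal{P}$, so the algorithm is proper. It remains to choose $q=O\!\left(\tfrac{1}{\epsilon}\log|\mathcal{P}|\right)$ large enough that, with probability at least $2/3$, \emph{every} consistent $g\in\mathcal{P}$ is $\epsilon$-close to $f$; then in particular the returned $g$ is $\epsilon$-close.

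First I would bound the failure probability for a single bad hypothesis. Call $g\in\mathcal{P}$ \emph{bad} if it is $\epsilon$-far from $f$, i.e.\ $\prob_{x}\!\left[g(x)\neq f(x)\right]\geq\epsilon$ for a uniformly random $x\in\mathbb{Z}_{2}^{n}$. Since the $x_{i}$ are independent and uniform, the probability that a fixed bad $g$ is consistent with the entire sample is at most $(1-\epsilon)^{q}\le e^{-\epsilon q}$. Next I would take a union bound over all bad hypotheses: there are at most $|\mathcal{P}|$ of them, so the probability that some bad $g$ survives is at most $|\mathcal{P}|\,e^{-\epsilon q}$. Choosing $q=\left\lceil\tfrac{1}{\epsilon}\left(\ln|\mathcal{P}|+\ln 3\right)\right\rceil=O\!\left(\tfrac{1}{\epsilon}\log|\mathcal{P}|\right)$ makes this at most $1/3$, which is exactly the guarantee we need.

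There is essentially no hard step here; the only points worth stating carefully are (i) that a consistent hypothesis always exists, so the algorithm is well defined and proper, and (ii) that we should normalize distances so that ``$\epsilon$-far'' translates to disagreement probability $\ge\epsilon$ under the uniform distribution, matching Definition~\ref{def:membership-tester}. I would remark that the learner makes no attempt to be computationally efficient — it may need to search all of $\mathcal{P}$ — but this is irrelevant to query complexity. Combining this fact with Proposition~\ref{prop:passive-testing-upper-bound-by-learning} then yields the familiar bound $Q_{\epsilon}^{p}(\mathcal{P})=O\!\left(\tfrac{1}{\epsilon}\log|\mathcal{P}|\right)$, which is the form in which it will be used later for juntas, partially symmetric functions, and low-degree polynomials.
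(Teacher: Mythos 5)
Your proposal is correct: the consistency/Occam's razor argument with a union bound over the at most $|\mathcal{P}|$ hypotheses that are $\epsilon$-far from the target is exactly the standard proof of this bound, and your choice of $q=\left\lceil\tfrac{1}{\epsilon}\left(\ln|\mathcal{P}|+\ln 3\right)\right\rceil$ gives the required $2/3$ success probability for a proper learner. The paper states this as a known Fact without proof, so there is nothing to compare against beyond noting that your argument is the canonical one it implicitly relies on.
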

%%Throughout the remainder of this paper, we focus our attention at 

For the sake of simplicity, we focus on a constant $\epsilon$ (say,
$\epsilon=0.001$) throughout the rest of this paper. This allows
us to drop the subscript $\epsilon$ from our notation when possible
(e.g., we write $Q\left(\mathcal{P}\right)$ instead of $Q_{\epsilon}\left(\mathcal{P}\right)$).

\subsection{Our results}

In~\cite{BBBY13} it was shown that active testing of dictator functions
(i.e., functions that only depend on a single input variable) requires
$\Theta\left(\log n\right)$ queries. Our first result extends this
to the family of $k$-linear functions; that is, the family of sums
of $k$ variables over $\mathbb{Z}_{2}$. Let $\mathrm{Lin}_{k}$
denote this family.
\begin{thm}
\label{thm:k-linear-active}Active or passive testing of Boolean $k$-linear
functions requires $\Theta\left(k\log n\right)$ queries, for all
$k \leq \frac{\log n}{10\log\log n}$.
\end{thm}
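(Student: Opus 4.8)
Here is how I would prove Theorem~\ref{thm:k-linear-active}. The upper bound is immediate: $|\mathrm{Lin}_k|=\binom nk\le n^{k}$, so Fact~\ref{fact:simple-PAC-UB} gives $Q_{\epsilon/2}^{\ell}(\mathrm{Lin}_k)=O(k\log n)$ and hence, by Propositions~\ref{prop:passive-testing-upper-bound-by-learning} and~\ref{prop:classic-active-passive}, $Q^{a}(\mathrm{Lin}_k)\le Q^{p}(\mathrm{Lin}_k)=O(k\log n)$ (with no restriction on $k$). For the lower bound, since $Q^{a}\le Q^{p}$ it is enough to prove $Q^{a}(\mathrm{Lin}_k)=\Omega(k\log n)$, which I would do via Yao's principle: let $D_{\mathrm{yes}}$ output a uniformly random parity $\chi_{S}$, $\chi_{S}(x)=\sum_{i\in S}x_{i}$, $|S|=k$, and let $D_{\mathrm{no}}$ output a uniformly random Boolean function, which a union bound over $\mathrm{Lin}_k$ shows is $\epsilon$-far from $\mathrm{Lin}_k$ with probability $1-2^{-\Omega(2^{n})}$. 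Thus it suffices to rule out, for $q\le c\,k\log n$ with $c$ a small constant, a deterministic $q$-query active tester that distinguishes $D_{\mathrm{yes}}$ from $D_{\mathrm{no}}$ with advantage $1/4$ (a genuine $\epsilon$-tester would achieve advantage $1/3-o(1)$, and randomness is removed by averaging). Fix the sample $U$; the tester is then a depth-$q$ decision tree of queries to points of $U$, with $2^{q}$ leaves, and a leaf $\ell$ is reached exactly when the queried function, restricted to a set $Q_{\ell}\subseteq U$ of $q$ (w.l.o.g.\ distinct) points, equals a fixed pattern $a_{\ell}\in\mathbb{Z}_2^{q}$.

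Under $D_{\mathrm{no}}$ the leaf $\ell$ is reached with probability exactly $2^{-q}$ (a random function is uniform and independent on distinct points), while under $D_{\mathrm{yes}}$ it is reached with probability $N(Q_{\ell},a_{\ell})/\binom nk$, where $N(Q,a)=|\{S:|S|=k,\ A_{Q}\mathbf 1_{S}=a\}|$, $A_{Q}$ is the matrix whose rows are the points of $Q$, and $\mathbf 1_{S}$ is the indicator vector of $S$. Hence the two transcript distributions differ by total variation $\tfrac12\sum_{\ell}|N(Q_{\ell},a_{\ell})/\binom nk-2^{-q}|$, and the whole lower bound reduces to the equidistribution claim: \emph{with probability $1-o(1)$ over $U$, for every $Q\subseteq U$ with $|Q|\le q$ and every $a\in\mathbb{Z}_2^{|Q|}$ one has $N(Q,a)=(1\pm\epsilon)\binom nk\,2^{-|Q|}$} --- for then the transcript distance is at most $\epsilon/2$ on this event and $(\epsilon/2+o(1))$ overall, contradicting advantage $1/4$.

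To prove the equidistribution claim, expand by Fourier inversion over $\mathbb{Z}_2^{|Q|}$: $N(Q,a)=2^{-|Q|}\sum_{\gamma}(-1)^{\langle\gamma,a\rangle}T_{\gamma}$, with $T_{0}=\binom nk$ and, for $\gamma\ne0$, $T_{\gamma}=e_{k}(\sigma_{1}^{(\gamma)},\dots,\sigma_{n}^{(\gamma)})$, the $k$-th elementary symmetric polynomial evaluated at $\sigma_{j}^{(\gamma)}=(-1)^{\langle\gamma,c_{j}\rangle}$ ($c_{j}$ the $j$-th column of $A_{Q}$). Since the columns of $A_{Q}$ are i.i.d.\ uniform in $\mathbb{Z}_2^{|Q|}$, for fixed $\gamma\ne0$ the $\sigma_{j}^{(\gamma)}$ are i.i.d.\ uniform signs, so $T_{\gamma}$ is a mean-zero degree-$k$ multilinear polynomial with $\|T_{\gamma}\|_{2}=\sqrt{\binom nk}$; hypercontractivity (Bonami--Beckner) then gives $\prob[\,|T_{\gamma}|>t\sqrt{\binom nk}\,]\le\exp(-\Omega(k\,t^{2/k}))$. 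Taking $t$ with $t\sqrt{\binom nk}=\epsilon\,2^{-|Q|}\binom nk$ and using $\binom nk\ge(n/k)^{k}$ makes the failure probability at most $\exp(-n^{\Omega(1)})$, small enough to survive a union bound over the $2^{O(q\log n)}$ pairs $(Q,\gamma)$ with $|Q|\le q$ once $q=O(k\log n)$. On the resulting good event, $|N(Q,a)-2^{-|Q|}\binom nk|\le2^{-|Q|}\sum_{\gamma\ne0}|T_{\gamma}|\le\epsilon\,2^{-|Q|}\binom nk$, proving the claim and hence $Q^{a}(\mathrm{Lin}_k)=\Omega(k\log n)$. The hypothesis $k=O(\log n/\log\log n)$ serves to keep these estimates in order and to guarantee $\log\binom nk=(1-o(1))k\log n$, so the two bounds match.

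The main obstacle is the equidistribution estimate: one needs concentration of $N(Q,a)$ around its mean strong enough to survive a union bound over all size-$\le q$ query sets an adaptive tester might probe inside the revealed sample $U$. A second-moment (Chebyshev) bound is too weak here --- it only yields $q=O(k)$ --- so a genuine large-deviation input, hypercontractivity for elementary symmetric polynomials (or an equivalent anticoncentration statement), appears to be essential; this same estimate is also what governs the admissible range of $k$.
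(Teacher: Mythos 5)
Your proof is correct, and while the overall reduction matches the paper's, the heart of your argument is genuinely different. The upper bound is identical. For the lower bound, your Yao/decision-tree reduction to the uniformity statement ``for every $S\subseteq U$ with $\left|S\right|\le q$ and every $y$, $\pi_{S}(y)\approx2^{-q}$'' is exactly the content of the paper's Lemma~\ref{lem:active-testing-dimension} (imported from~\cite{BBBY13}), so up to that point the two proofs coincide. The difference is in how the concentration of $N(Q,a)$ is established. The paper (Lemma~\ref{lem:main}) works directly with the counting random variable $Y$, truncates it to a $c$-Lipschitz version $\hat{Y}$ using the Erd\H{o}s--Rado $\Delta$-system theorem to control the truncation error, and applies Talagrand's inequality; you instead expand $N(Q,a)$ by Fourier inversion over $\mathbb{Z}_{2}^{\left|Q\right|}$ into elementary symmetric polynomials $T_{\gamma}=e_{k}(\sigma^{(\gamma)})$ of i.i.d.\ signs and invoke the hypercontractive tail bound $\prob[\left|T_{\gamma}\right|>t\sqrt{\tbinom{n}{k}}]\le\exp(-\Omega(kt^{2/k}))$. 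I checked the details: $\|T_{\gamma}\|_{2}^{2}=\binom{n}{k}$, the threshold $t=\epsilon2^{-q}\sqrt{\binom{n}{k}}$ comfortably exceeds the $(2e)^{k/2}$ floor needed for the hypercontractive bound when $q\le(\tfrac12-\delta)\log\binom{n}{k}$, and the resulting $\exp(-n^{\Omega(1)})$ failure probability survives the union bound over $(Q,\gamma)$; your remark that Chebyshev only reaches $q=O(k)$ is also accurate. The trade-offs: your route is shorter and more self-contained for $\mathbb{Z}_{2}^{n}$, but is inherently capped at $q\le\tfrac12\log\binom{n}{k}$ (you need $2^{-q}\sqrt{\binom{n}{k}}\ge1$), whereas the paper's argument pushes $q$ up to $(1-\tfrac1k)\log\binom{n}{k}$, i.e., a leading constant approaching $1$; moreover the paper's Lemma~\ref{lem:main} is stated for arbitrary abelian groups, which is exploited in Section~\ref{sec:Discussion} for the Cayley-graph cutoff application, while character sums plus hypercontractivity over a general abelian group would require a bit more care. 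Both yield $\Omega(k\log n)$, so the theorem as stated follows either way.
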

Theorem~\ref{thm:k-linear-active} and its proof imply a lower bound
for active testing of superfamilies of $k$-linear functions, such
as $k$-juntas and $\left(n-k\right)$-symmetric functions. A function
is called \emph{$k$-junta} if it depends on at most $k$ of its input variables,
referred to as the influential variables (e.g., a dictator function
is a $1$-junta). We denote the family of $k$-juntas by $\mathrm{Jun}_{k}$.
Partially symmetric functions are a generalization of juntas, where
the remaining variables can influence the output of the function,
but only in a symmetric manner.
\begin{defn}[Partially symmetric functions \cite{BWY12}]
For a subset $T\subseteq[n]:=\{1,\ldots,n\}$, a function $f:\mathbb{Z}_{2}^{n}\to\mathbb{Z}_{2}$
is called $T$-\emph{symmetric} if permuting the labels of the variables
of $T$ does not change the function. Moreover, $f$ is called $t$-\emph{symmetric}
if there exists $T\subseteq[n]$ of size at least $t$ such that $f$
is $T$-symmetric. We denote the family of $t$-symmetric functions
by $\textrm{Sym}_{t}$.
\end{defn}
Partially symmetric functions were introduced as part of the research
of isomorphism testing~\cite{BWY12,CFGM12}, where it was shown that
testing whether a function is $(n-k)$-symmetric for any $k<n/10$
can be done using $O(k\log k)$ queries. The special case of $2$-symmetric
functions has already been considered by Shannon in~\cite{Sha49}.
In addition to the $\Omega(k\log n)$ lower bound for active testing
of partially symmetric functions, we provide an upper bound as well
as lower and upper bounds for passive testing (detailed in Table~\ref{tab:main-result}).
In particular, we show that for a constant $k$, the family of partially
symmetric functions demonstrates a significant gap among the three
different testing scenarios and proper learning.
\begin{thm}
\label{thm:psf-separation}For a constant $k$ we have 
\begin{eqnarray*}
Q\left(\mathrm{Sym}_{n-k}\right) & = & \Theta\left(1\right),\\
Q^{a}\left(\mathrm{Sym}_{n-k}\right) & = & \Theta\left(\log n\right),\\
Q^{p}\left(\mathrm{Sym}_{n-k}\right) & = & \Theta(n^{1/4}\sqrt{\log n}),\\
Q^{\ell}\left(\mathrm{Sym}_{n-k}\right) & = & \Theta(\sqrt{n}).
\end{eqnarray*}

\end{thm}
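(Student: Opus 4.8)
The first estimate is immediate: $Q(\mathrm{Sym}_{n-k})=O(k\log k)=O(1)$ by the classic isomorphism--free tester of~\cite{BWY12}, and $Q(\mathrm{Sym}_{n-k})\ge 1$ trivially since $\mathrm{Sym}_{n-k}$ is a proper nonempty family. Throughout I treat $k$ as a fixed constant, so $\binom nk=\mathrm{poly}(n)$ and $\log|\mathrm{Sym}_{n-k}|=\Theta(n)$. For active testing, the lower bound $Q^a(\mathrm{Sym}_{n-k})=\Omega(\log n)$ is inherited from Theorem~\ref{thm:k-linear-active} and the remark following it: $\mathrm{Lin}_k\subseteq\mathrm{Jun}_k\subseteq\mathrm{Sym}_{n-k}$, and the hard instances built there can be taken $\epsilon$-far from \emph{all} partially symmetric functions. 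For the matching upper bound I would adapt the ``query-as-coordinate-revelation'' idea behind active dictatorship testing in~\cite{BBBY13}: draw a polynomial-size random sample $U$, query $f$ on $O(\log n)$ of its points, and use the observed labels together with the known bit-patterns of those points to pin down the (at most one) candidate special set $S$ of size $k$ --- with $O(\log n)$ random points all $\binom nk$ coordinate $k$-tuples have distinct label-signatures with high probability. A constant number of further queries then verifies that $f$ is $S$-symmetric by checking agreement on sampled pairs that coincide on $S$ and have equal Hamming weight off $S$; reject if no consistent $S$ is found.

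For passive testing, the upper bound $Q^p(\mathrm{Sym}_{n-k})=O(n^{1/4}\log n)$ follows from a birthday-paradox tester. Take $m=\Theta(n^{1/4}\log n)$ uniform examples. Since the Hamming weight of a random point is concentrated in a window of $\Theta(\sqrt n)$ values, among the $\Theta(m^2)$ pairs of examples there are $\Theta(m^2/\sqrt n)=\Theta(\log^2 n)$ pairs of equal Hamming weight, and for each candidate $S$ a constant fraction of these also coincide on $S$. If $f\in\mathrm{Sym}_{n-k}$ with special set $S^\ast$, every such $S^\ast$-compatible pair carries equal labels; if $f$ is $\epsilon$-far from $\mathrm{Sym}_{n-k}$ then it is $\epsilon$-far from the $S$-symmetric functions for \emph{every} $S$, and comparing $f$ to the closest $S$-symmetric function (plurality over each class determined by the values on $S$ and the weight off $S$) shows that a random $S$-compatible pair disagrees with probability at least $\epsilon$; hence $\Theta(\log^2 n)$ essentially disjoint such pairs expose a disagreement with probability $1-n^{-\omega(1)}$, enough to union-bound over all $\binom nk$ candidates. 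Accept iff some $S$ shows no disagreement. For the lower bound $Q^p(\mathrm{Sym}_{n-k})=\Omega(n^{1/4}\log n)$ I would invoke Yao's principle: design $\mathcal D_{\mathrm{yes}}$ supported on $\mathrm{Sym}_{n-k}$ and $\mathcal D_{\mathrm{no}}$ supported on functions $\epsilon$-far from $\mathrm{Sym}_{n-k}$ so that a witness of non-symmetry requires a pair of examples of equal Hamming weight that also exposes an inconsistency at one of the $\Theta(n)$ candidate special-coordinate locations; by the birthday bound no such witness occurs among $o(n^{1/4}\log n)$ examples, so the two induced distributions on labeled samples are statistically close and cannot be told apart.

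For proper learning, the upper bound $Q^\ell(\mathrm{Sym}_{n-k})=O(\sqrt n)$ comes from the observation that, except with probability $<\epsilon/2$, a random point has its off-$S$ Hamming weight in a central window of only $O(\sqrt n)$ values; hence an $(n-k)$-symmetric function is determined up to error $\epsilon/2$ by its special set ($\binom nk$ choices) together with its values on $\{0,1\}^k$ times that window ($2^{O(\sqrt n)}$ choices for constant $k$). With $m=O(\sqrt n)$ examples every signature outside an $\epsilon$-mass set is observed, so empirical risk minimization over this effective class of size $2^{O(\sqrt n)}$ outputs an $\epsilon$-close hypothesis in $\mathrm{Sym}_{n-k}$; the Occam union bound over the $\binom nk\cdot 2^{O(\sqrt n)}$ data-consistent hypotheses again costs only $O(\sqrt n)$ samples, matching the bound of Fact~\ref{fact:simple-PAC-UB} restricted to the relevant region. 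The lower bound $Q^\ell(\mathrm{Sym}_{n-k})=\Omega(\sqrt n)$ is a standard ``random function on a small domain'' argument: take $f(x)=h(|x|)$ with $h$ chosen uniformly at random on the central window of $\Theta(\sqrt n)$ weights and $0$ elsewhere; then $f\in\mathrm{Sym}_{n-k}$, yet a learner using $o(\sqrt n)$ examples sees $h$ on $o(\sqrt n)$ weights and must therefore err on a constant fraction of the remaining central weights, which together carry constant probability mass, exceeding $\epsilon$.

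The main obstacle I anticipate is the passive lower bound: one must construct $\mathcal D_{\mathrm{no}}$ so that it is genuinely far from \emph{every} partially symmetric function while its labeled-sample distribution mimics that of $\mathcal D_{\mathrm{yes}}$ down to the constant statistical-distance threshold, and the extra $\log n$ factor beyond the bare birthday bound $n^{1/4}$ must be extracted from the cost of simultaneously hiding the $k$ special coordinates among $\Theta(n)$ alternatives. A secondary difficulty is the active upper bound, where simulating the constant-query classic tester from a random sample requires care in how the $O(\log n)$ queries are spent to isolate the special coordinates and then verify symmetry.
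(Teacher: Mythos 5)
Most of your outline tracks the paper: the classic bound is quoted from~\cite{BWY12}, the active lower bound is inherited from the $k$-linear bound exactly as in Corollary~\ref{cor:juntas-active-LB}, your passive upper bound is the same birthday-paradox tester as Proposition~\ref{prop:psf-active-passive-UB}, and your ``effective class of size $2^{O(\sqrt n)}$'' argument for learning is the packing/covering argument of Proposition~\ref{prop:refined-PAC-UB-LB} and Corollary~\ref{cor:psf-proper-learning-UB-and-LB} in different clothing. Two pieces do not go through, and one of them is the crux of the theorem.

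The passive lower bound is the genuine gap, and you flag it yourself without supplying the missing idea. The paper's Proposition~\ref{prop:psf-passive-LB} takes the hard distribution to be a \emph{random $k$-linear function plus a random symmetric function}: the random symmetric part makes every query whose Hamming weight is attained only once completely uninformative, a balls-and-bins count shows that among $q=o\left(n^{1/4}\sqrt{k\log n}\right)$ samples only $o(k\log n)$ queries share a Hamming weight with another query, and those few useful queries then face the $\Omega(k\log n)$ active lower bound for $\mathrm{Lin}_k$ (Theorem~\ref{thm:k-linear-active-LB}). Your plan --- ``no witness pair occurs, by the birthday bound'' --- only yields $\Omega(n^{1/4})$, or $\Omega(n^{1/4}2^{k/2})$ after conditioning on agreement on $S$, because equal-weight pairs certainly do appear once $q=\omega(n^{1/4})$; the extra factor cannot come from the absence of collisions. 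Note also that the exponent you are chasing is unattainable: Proposition~\ref{prop:psf-active-passive-UB} gives the upper bound $O\left(n^{1/4}\sqrt{2^{k}k\log n}\right)=O\left(n^{1/4}\sqrt{\log n}\right)$ for constant $k$, so the $\Omega(n^{1/4}\log n)$ you propose to prove would contradict it; the bound the propositions actually match is $\Theta\left(n^{1/4}\sqrt{\log n}\right)$, and the displayed statement of the theorem appears to carry a typo on this line.

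The active upper bound is a secondary problem. Pinning down ``the (at most one) candidate special set $S$'' from label signatures of $O(\log n)$ random points does not work: a function in $\mathrm{Sym}_{n-k}$ need not single out any set (a fully symmetric function is consistent with every candidate), and for a function far from the class the labels of $O(\log n)$ points with mostly distinct Hamming weights carry no exploitable signature, since the value of a partially symmetric function also depends on the weight off $S$. The paper instead keeps the consistency-checking structure of the passive tester and only uses the active freedom to select, from a polynomial-size sample, $O(2^{k}k\log n)$ points forming equal-Hamming-weight pairs, enough to run the per-candidate test and union bound over all $\binom{n}{k}$ sets. Your remaining components (classic bound, active lower bound, both learning bounds) are sound and essentially identical to the paper's.
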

The last family of functions considered in this work is low degree
polynomials, with special consideration given to linear functions.
The following indicates that passive testing of degree $d$ polynomials,
denoted by $\mathrm{Pol}_{d}$, is essentially as hard as properly
learning them.
\begin{thm}
\label{thm:passive-polynoms-UB-and-LB}The query complexity of passive
testing of degree $d$ polynomials is $\Theta(n^{d})$, for constant
$d$.
\end{thm}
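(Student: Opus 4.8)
The plan is to establish the two directions of $\Theta(n^{d})$ separately; the upper bound is essentially free and all the work is in the lower bound. For the upper bound, note that $|\mathrm{Pol}_{d}|=2^{\binom{n}{\le d}}$, so $\log_{2}|\mathrm{Pol}_{d}|=\binom{n}{\le d}=\Theta(n^{d})$ for constant $d$; hence Fact~\ref{fact:simple-PAC-UB} gives $Q^{\ell}(\mathrm{Pol}_{d})=O(n^{d})$, and Proposition~\ref{prop:passive-testing-upper-bound-by-learning} then yields $Q^{p}(\mathrm{Pol}_{d})\le Q^{\ell}_{\epsilon/2}(\mathrm{Pol}_{d})+O(1/\epsilon)=O(n^{d})$.

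For the lower bound I would use Yao's principle. Fix $q=cn^{d}$ for a small constant $c=c(d)>0$ and take $\mathcal{D}_{\mathrm{yes}}$ to be uniform over $\mathrm{Pol}_{d}$ and $\mathcal{D}_{\mathrm{no}}$ to be uniform over \emph{all} Boolean functions $\mathbb{Z}_{2}^{n}\to\mathbb{Z}_{2}$. First, $\mathcal{D}_{\mathrm{no}}$ is $\epsilon$-far from $\mathrm{Pol}_{d}$ with probability $1-o(1)$: a union bound over $\mathrm{Pol}_{d}$ shows that the number of functions $\epsilon$-close to some element of $\mathrm{Pol}_{d}$ is at most $|\mathrm{Pol}_{d}|\cdot 2^{H(\epsilon)2^{n}}=2^{O(n^{d})+H(\epsilon)2^{n}}=2^{(1-\Omega(1))2^{n}}$, since $\epsilon$ is a constant below $1/2$ (here $H$ is the binary entropy). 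The goal is then to show that the distribution of the labeled sample $\bigl((x_{1},f(x_{1})),\dots,(x_{q},f(x_{q}))\bigr)$, with $x_{1},\dots,x_{q}$ uniform and independent, is $o(1)$-close in total variation whether $f\sim\mathcal{D}_{\mathrm{yes}}$ or $f\sim\mathcal{D}_{\mathrm{no}}$; since the acceptance probability of any passive tester depends only on the labeled sample, this rules out a $q$-query $q$-sample tester and gives $Q^{p}(\mathrm{Pol}_{d})>cn^{d}=\Omega(n^{d})$.

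To compare the samples, write $\chi(x)\in\mathbb{Z}_{2}^{\binom{n}{\le d}}$ for the vector of monomial evaluations at $x$, i.e.\ $\chi(x)_{S}=\prod_{i\in S}x_{i}$ over subsets $|S|\le d$, so that $f(x)=\langle c,\chi(x)\rangle$ when $f$ has coefficient vector $c$. If $x_{1},\dots,x_{q}$ are distinct and $\chi(x_{1}),\dots,\chi(x_{q})$ are linearly independent over $\mathbb{Z}_{2}$, then under $\mathcal{D}_{\mathrm{yes}}$ the label vector equals $Mc$ with $M$ the full-row-rank matrix having rows $\chi(x_{i})$ and $c$ uniform, so it is uniform over $\mathbb{Z}_{2}^{q}$, and under $\mathcal{D}_{\mathrm{no}}$ it is obviously uniform over $\mathbb{Z}_{2}^{q}$; on this event the two labeled-sample distributions literally coincide. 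So the whole lower bound reduces to the claim that I expect to be the crux: \emph{for $q=cn^{d}$ and a small enough constant $c$, with probability $1-o(1)$ the points $x_{1},\dots,x_{q}$ are distinct and the vectors $\chi(x_{1}),\dots,\chi(x_{q})$ are linearly independent over $\mathbb{Z}_{2}$.} (Distinctness fails with probability $\le\binom{q}{2}2^{-n}=o(1)$, so the content is linear independence, which certainly forces $q\le\binom{n}{\le d}$ — matching the target up to the constant.)

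To prove the claim I would pass to the dual picture: a linear dependence $\bigoplus_{i\in S}\chi(x_{i})=0$ says exactly that the indicator of $\{x_{i}:i\in S\}$ is orthogonal to every degree-$\le d$ polynomial, i.e.\ is a nonzero codeword of the dual Reed--Muller code $\mathrm{RM}(d,n)^{\perp}=\mathrm{RM}(n-d-1,n)$ whose support lies inside $\{x_{1},\dots,x_{q}\}$; reading off the degree-$1$ coordinates already forces $\bigoplus_{i\in S}x_{i}=0$, so in a dependent set each point is determined by the others. A union bound over such codewords gives $\prob[\text{dependent}]\le\sum_{t\ge 2^{d+1}}A_{t}\,(q/2^{n})^{t}$, where $A_{t}$ is the number of weight-$t$ codewords of $\mathrm{RM}(n-d-1,n)$ and $2^{d+1}$ is its minimum distance. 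The minimum-weight codewords are the indicators of $(d+1)$-dimensional affine subspaces, of which there are at most $2^{(d+2)n}$, and since $d+2<2^{d+1}$ the term $t=2^{d+1}$ contributes $\le 2^{(d+2)n}(cn^{d}2^{-n})^{2^{d+1}}=n^{O(1)}2^{-(2^{d+1}-d-2)n}=o(1)$; the same estimate handles the few additional codewords of weight below $2^{d+2}$ via the Kasami--Tokura description of small-weight Reed--Muller codewords. The hard part will be the intermediate range $2^{d+2}\le t\le q$: there one needs quantitative control on the low-weight part of the weight distribution of $\mathrm{RM}(n-d-1,n)$ — equivalently, a bound on the number of degree-$\le d$ polynomials whose bias $\E_{x}(-1)^{g(x)}$ is bounded away from $0$ — of the form $A_{t}\le 2^{c_{1}(d)n+c_{2}(d)t\log n}$ with $c_{1}(d)<2^{d+1}$, so that $\sum_{t}A_{t}(q/2^{n})^{t}$ is a convergent geometric-type sum that stays $o(1)$ for $q=cn^{d}$ with $c$ small. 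Such bounds follow from Kaufman--Lovett--Porat-type results on Reed--Muller weight distributions (and can also be phrased, via the Gowers-norm identity $|\E_{x}(-1)^{g(x)}|^{2^{d}}\le|\E_{a_{1},\dots,a_{d}}(-1)^{\Delta_{a_{1}}\cdots\Delta_{a_{d}}g}|$, as a statement about biased multilinear forms, where a sunflower/$\Delta$-system argument on the structured top parts is a natural tool); pinning down this step cleanly, uniformly in constant $d$, is the main obstacle I anticipate.
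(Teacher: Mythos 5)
Your upper bound and the overall shape of your lower bound match the paper: the paper also reduces everything to showing that for $q=\lfloor n_{d}/2e\rfloor$ uniform samples, the $d$-evaluation vectors $\chi(x_{1}),\dots,\chi(x_{q})\in\mathbb{Z}_{2}^{n_{d}}$ are linearly independent with probability $1-o(1)$, whence every label vector in $\mathbb{Z}_{2}^{q}$ is realized by some degree-$d$ polynomial and the tester learns nothing. The problem is that you do not actually prove this claim. Your route --- a union bound over low-weight codewords of the dual code $\mathrm{RM}(n-d-1,n)$ whose support falls inside the sample --- handles only the minimum-weight stratum ($(d+1)$-dimensional affine flats) and the Kasami--Tokura range, and you explicitly leave open the estimate $A_{t}\le 2^{c_{1}(d)n+c_{2}(d)t\log n}$ for all $2^{d+2}\le t\le cn^{d}$. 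That is not a routine detail: for the sum $\sum_{t}A_{t}(q/2^{n})^{t}$ to converge you essentially need $A_{t}\lesssim\binom{2^{n}}{t}2^{-\Omega(n_{d})}$ uniformly over this whole range, i.e., quantitative control of the low-weight part of the Reed--Muller weight distribution, which is a substantial problem in its own right and is not what Kaufman--Lovett--Porat-type statements (which concern $\mathrm{RM}(d,n)$ for constant $d$, not its dual) hand you off the shelf. As written, the lower bound has a genuine hole exactly where you predicted it would.

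The paper closes this step by a much softer argument that avoids weight distributions entirely: it adds the points one at a time and bounds $\prob[\chi(x_{i})\in V_{i}]$ where $V_{i}=\mathrm{span}\{\chi(x_{1}),\dots,\chi(x_{i-1})\}$. The key input is Lemma~4 of~\cite{BHL12}: any subspace of $\mathbb{Z}_{2}^{n_{d}}$ of dimension less than $\sum_{i=0}^{d}\binom{n(1-1/d)}{i}$ contains the $d$-evaluations of at most $2^{n(1-1/d)}$ points of $\mathbb{Z}_{2}^{n}$. Since $\dim V_{i}\le i-1<n_{d}/2e\le\sum_{i=0}^{d}\binom{n(1-1/d)}{i}$, each new uniform point lands in $V_{i}$ with probability at most $2^{-n/d}$, and a union bound over the $q$ steps gives failure probability $q\cdot 2^{-n/d}=o(1)$ (even for $d$ up to $n^{1/3}$). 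If you want to complete your write-up, replace your dual-codeword union bound with this sequential span argument; the rest of your proposal (the upper bound via Fact~\ref{fact:simple-PAC-UB} and Proposition~\ref{prop:passive-testing-upper-bound-by-learning}, the Yao-style framing, and the observation that full row rank makes the yes- and no-distributions of labeled samples coincide) is correct and consistent with the paper.
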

On the other hand, active testing can be done slightly more efficiently,
at least for linear functions.
\begin{thm}
\label{thm:active-linear-UB-and-LB}The query complexity of active
testing of linear functions is $\Theta(n/\log n)$.
\end{thm}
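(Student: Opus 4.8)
The plan is to pin down both bounds through a single structural parameter of the random sample~$U$: the minimum size $s^{*}(U)$ of a nonempty \emph{zero-sum} subset $T\subseteq U$ (one with $\sum_{u\in T}u=0$ in $\mathbb{Z}_2^n$), equivalently the minimum size of a linearly dependent subset. Writing the sample size as $u=\mathrm{poly}(n)$, say $u=n^{a}$, the claim is that with probability $1-o(1)$ over $U$ one has $s^{*}(U)=\Theta(n/\log n)$; more precisely, every subset of $U$ of size smaller than $(1-o(1))n/(a\log n)$ is linearly independent, yet zero-sum subsets of size $O(n/\log n)$ exist in abundance. The lower bound in this claim is a first-moment estimate: a $j$-element subset is dependent only if one of its $2^{j}-1$ nonempty subsets sums to~$0$, so the expected number of dependent $j$-subsets of $U$ is at most $\binom{u}{j}2^{j}2^{-n}\le 2^{j(a\log n+1)-n}=o(1)$ once $j=o(n/\log n)$. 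The existence half is a second-moment estimate: for $s=\lceil 2n/(a\log n)\rceil$ the expected number of zero-sum $s$-subsets of $U$ is $(1+o(1))\binom{u}{s}2^{-n}=2^{n(1-o(1))}\to\infty$, and its variance is of the same order (distinct $T\neq T'$ contribute $2^{-2n}$ each), so such subsets exist whp.

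\emph{Lower bound.} Fix $q=o(n/\log n)$ and compare two input distributions: $\mathcal Y$, a uniformly random member of $\mathrm{Lin}=\bigcup_{k}\mathrm{Lin}_{k}$ (that is, $\chi_{S}$ for uniform $S\subseteq[n]$), and $\mathcal N$, a uniformly random Boolean function, which by a volume argument is $\epsilon$-far from $\mathrm{Lin}$ with probability $1-o(1)$ (there are $2^{n}$ linear functions and only $2^{H(\epsilon)2^{n}}$ functions in a Hamming ball of radius $\epsilon 2^{n}$, where $H$ is the binary entropy). Condition on the event, of probability $1-o(1)$, that $U$ has no dependent subset of size $\le q$. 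Then whatever distinct points $u_{1},\dots,u_{q}\in U$ the (possibly adaptive) tester queries, they are linearly independent; revealing the answers one at a time, under $\mathcal Y$ each new answer $\langle u_{i},\mathbf 1_{S}\rangle$ is uniform and independent of the history, exactly like the fresh random bits seen under $\mathcal N$. Hence the transcript has the same distribution under $\mathcal Y$ and under $\mathcal N$, so the tester cannot accept the first with probability $\ge 2/3$ while rejecting the second with probability $\ge 2/3$; this gives $Q^{a}(\mathrm{Lin})=\Omega(n/\log n)$.

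\emph{Upper bound.} Inspecting $U$ costs nothing, so the tester locates a zero-sum subset $T\subseteq U$ of size $s=O(n/\log n)$ and checks the single linear constraint $\sum_{u\in T}f(u)=0$. Every $f\in\mathrm{Lin}$ passes, since $\sum_{u\in T}f(u)=f\bigl(\sum_{u\in T}u\bigr)=f(0)=0$. For soundness, let $f$ be $\epsilon$-far; averaging $|\{x:f(x)\neq g(x)\}|$ over $g\in\mathrm{Lin}$ (the disagreement probability is $1/2$ for each $x\neq0$) produces a linear $g_{0}$ with $\epsilon 2^{n}\le|E|\le 2^{n-1}$, where $E=\{x:f(x)\neq g_{0}(x)\}$; for a zero-sum $T$ one has $\sum_{u\in T}f(u)=|T\cap E|\bmod 2$, so the constraint fails exactly when $|T\cap E|$ is odd. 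A second-moment computation, now jointly over $U$, shows that whp a constant fraction of the zero-sum $s$-subsets of $U$ have $|T\cap E|$ odd --- this is where it matters that the density $|E|/2^{n}$ lies in $[\epsilon,1/2]$, so that $|T\cap E|\bmod 2$ behaves like a nearly fair coin. The tester therefore draws $O(1)$ independent uniformly random zero-sum $s$-subsets of $U$ and rejects if any associated constraint fails; this rejects an $\epsilon$-far $f$ with probability $\ge 2/3$ using $O(n/\log n)$ queries, completing $Q^{a}(\mathrm{Lin})=\Theta(n/\log n)$.

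The main obstacle is the soundness of the upper bound: an adversarial far $f$ might satisfy every linear constraint the tester happens to test, so one must show that over the randomness of $U$ \emph{most} short zero-sum subsets are sensitive to the disagreement set $E$. This is precisely what the joint second-moment estimate, together with the density bound $|E|/2^{n}\le 1/2$, delivers. A minor point worth noting is that this tester is not computationally efficient --- it searches $U$ for short zero-sum subsets and samples among them --- but since the theorem concerns query complexity only, that is immaterial.
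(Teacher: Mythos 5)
Your lower bound is essentially the paper's: a first-moment bound shows that $U$ has no linear dependency of size $o(n/\log u)$ with high probability, and on a linearly independent query set a uniformly random linear function is indistinguishable from a uniformly random Boolean function. That part is fine. Your upper bound also uses the same test as the paper --- find a short zero-sum subset $T\subseteq U$ and check $\sum_{u\in T}f(u)=0$, which for $|T|=s$ is exactly the BLR test on $s-1$ points and their sum --- but your soundness argument has a genuine gap.

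The gap is the claim that, because the disagreement set $E$ with a linear $g_0$ has density $\mu=|E|/2^{n}\in[\epsilon,1/2]$, the parity $|T\cap E|\bmod 2$ over a random zero-sum $s$-subset $T$ ``behaves like a nearly fair coin.'' This is false, and the zero-sum constraint on $T$ is precisely what breaks it. Take $E=\{x:\ell(x)=1\}$ for a nonzero linear $\ell$; then $\mu=1/2$, yet for every zero-sum $T$ one has $|T\cap E|\equiv\sum_{u\in T}\ell(u)=\ell\bigl(\sum_{u\in T}u\bigr)=\ell(0)=0\pmod 2$, so the parity is identically even. (In this example $f=g_{0}+\ell$ happens to be linear, so soundness is not actually violated, but your stated justification is refuted: the density of $E$ controls only the single Fourier coefficient corresponding to $g_{0}$, whereas the bias of $|T\cap E|\bmod 2$ over zero-sum $T$ is governed by $\max_{S}|\hat{f}(S)|$.) The correct argument is the paper's Lemma~\ref{lem:BLR-k}: the acceptance probability of the test on $s-1$ uniform points and their sum equals $\tfrac12+\tfrac12\sum_{S}\hat{f}(S)^{s}$, which is at most $\tfrac12+\tfrac12(1-2\epsilon)^{s-2}$ by Parseval together with the fact that being $\epsilon$-far from \emph{every} linear function forces \emph{every} coefficient to satisfy $|\hat{f}(S)|\le 1-2\epsilon$. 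You need this (or an equivalent spectral bound) in place of the Bernoulli-parity heuristic; your ``joint second-moment computation over $U$'' would then only be responsible for transferring the bound from uniformly random tuples to tuples lying inside $U$, which is the same (somewhat informal) step the paper also takes.
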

Table~\ref{tab:main-result} summarizes the results presented in
this work for passive and active testing, as well as the best known
query complexity for the classical model of property testing and proper
learning.

\begin{table}[h]
\noindent \centering{}%
\begin{tabular}{|>{\centering}m{3cm}|>{\centering}m{2.5cm}|>{\centering}m{2.5cm}|>{\centering}m{3.5cm}|c|}
\hline 
Family & Classic ($Q$) & Active ($Q^{a}$) & Passive ($Q^{p}$) & Learning ($Q^{\ell}$)\tabularnewline
\hline 
\hline 
Symmetric & $O(1)$ & $O(1)$ & $\Theta(n^{1/4})$ & $\Theta(\sqrt{n})$\tabularnewline
\hline 
Linear & $O(1)$ \cite{BLR93} & $\Theta(n/\log n)$ & $n+\Theta(1)$ & $n+\Theta(1)$\tabularnewline
\hline 
$d$-degree polynomials & $\Theta(2^{d})$ \cite{AKKLR03,BKSSZ10} &  & $\Theta(n^{d})$ & $\Theta(n^{d})$\tabularnewline
\hline 
$k$-linear & $O(k\log k)$,\\
$k-o(k)$ \cite{Blais09,BK12} & $\Theta(k\log n)$ & $\Theta(k\log n)$ & $\Theta(k\log n)$\tabularnewline
\hline 
$k$-juntas & $O(k\log k)$,\\
$\Omega(k)$ \cite{Blais09,BBM12,CG04} & $\Omega(k\log n)$ & $\Omega(2^{k/2}+k\log n)$ & $\Theta(2^{k}+k\log n)$\tabularnewline
\hline 
$(n-k)$-symmetric & $O(k\log k)$,\\
$\Omega(k)$ \cite{BWY12} & $O(2^{k}k\log n)$,\\
$\Omega(k\log n)$ & $O(n^{1/4}\sqrt{2^{k}k\log n})$,\\
$\Omega(n^{1/4}\sqrt{2^{k}+k\log n})$ & $\Theta(\sqrt{n}2^{k})$\tabularnewline
\hline 
\end{tabular}\protect\caption{\label{tab:main-result}
Summary of best bounds, for fixed $\epsilon$
and $k<\log n/(10 \log\log n)$}
\end{table}

The rest of the paper is organized as follows. The lower bound for
active testing of $k$-linear functions, which applies to juntas and
partially symmetric functions as well, is proved in Section~\ref{sec:-linear-functions}
by establishing a general result for random subsets of abelian groups,
proved by combining probabilistic and combinatorial tools including
the Talagrand inequality and the Erd\H{o}s--Rado results on $\Delta$-systems.
Section~\ref{sec:Partially-symmetric-functions} provides the lower
and upper bounds for active and passive testing of symmetric and partially
symmetric functions, as described in Table~\ref{tab:main-result}.
The results concerning low degree polynomials and linear functions
in particular are presented in Section~\ref{sec:Low-degree-polynomials}.
Concluding remarks and open problems are in Section~\ref{sec:Discussion}.
The proofs in Sections~\ref{sec:Partially-symmetric-functions} and~\ref{sec:Low-degree-polynomials}
are also based on probabilistic, combinatorial, and algebraic techniques.

\section{$k$-linear functions\label{sec:-linear-functions}}

Theorem~\ref{thm:k-linear-active} states that the query complexity
of active or passive testing of $k$-linear functions is $\Theta(k\log n)$.
The upper bound can be obtained by applying Propositions~\ref{prop:classic-active-passive}
and~\ref{prop:passive-testing-upper-bound-by-learning}, and Fact~\ref{fact:simple-PAC-UB},
given that there are exactly $\binom{n}{k}$ different $k$-linear
functions.

In order to prove a lower bound for active testing of $k$-linear
functions, we use the following lemma, which is an adaptation of the
tools used in~\cite{BBBY13} to prove active testing lower bounds
(specifically, Theorem 6.6 and Lemma B.1 ibid).
\begin{defn*}
A property $\mathcal{P}$ is called \emph{$\epsilon$-nontrivial} if a random
Boolean function is $\epsilon$-close to $\mathcal{P}$ with probability
at most $0.01$.\end{defn*}
\begin{lem}[\cite{BBBY13}]
 \label{lem:active-testing-dimension} Let $\mathcal{P}$ be an $\epsilon$-nontrivial
property of Boolean functions and let $\pi$ be a distribution supported
on $\mathcal{P}$. Given a set $S=\left\{ x_{1},x_{2},\ldots,x_{q}\right\} $
of $q$ queries and a vector $y\in\mathbb{Z}_{2}^{q}$, define 
\[
\pi_{S}(y)=\prob_{f\sim\pi}\left[f\left(x_{i}\right)=y_{i}\mbox{ for }i=1,2,\ldots,q\right].
\]
Choose at random a set $U$ of $u$ samples, and suppose that with
probability at least $\frac{3}{4}$, every set $S\subseteq U$ of
$q$ queries and every $y\in\mathbb{Z}_{2}^{q}$ satisfy $\pi_{S}\left(y\right)<\frac{6}{5}2^{-q}$.
Then, $Q_{\epsilon}^{a}(\mathcal{P},u)\geq q$.
\qed
\end{lem}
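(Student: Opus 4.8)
The plan is the Yao-style indistinguishability argument underlying the lower bounds of~\cite{BBBY13}: pit the given distribution $\pi$ (the ``yes'' instances, supported on $\mathcal{P}$) against the distribution $\rho$ that draws $f\colon\mathbb{Z}_2^n\to\mathbb{Z}_2$ uniformly at random (the ``no'' instances). Suppose for contradiction that a $u$-sample $q$-query $\epsilon$-tester $T$ for $\mathcal{P}$ exists; we may assume $u\ge q$, since otherwise the hypothesis is vacuous. Write $a(f)$ for the probability that $T$ accepts $f$, over the random sample set $U$ and the coins of $T$. Since $\mathrm{supp}(\pi)\subseteq\mathcal{P}$, completeness gives $\E_{f\sim\pi}[a(f)]\ge 2/3$. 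On the other hand, nontriviality of $\mathcal{P}$ says that a uniform random $f$ is $0.01$-far from $\mathcal{P}$ with probability $1-o(1)$; as the fixed $\epsilon$ satisfies $\epsilon\le 0.01$, such an $f$ is also $\epsilon$-far, so soundness gives $a(f)\le 1/3$ for a $1-o(1)$ fraction of $f$, hence $\E_{f\sim\rho}[a(f)]\le 1/3+o(1)$. The whole game is to show these two estimates cannot coexist for $n$ large.

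\textbf{Conditioning on the sample.} I would then condition on $U$. Once $U$ is revealed, $T$ is a randomized adaptive $q$-query algorithm with queries in $U$, i.e.\ (after fixing its coins by averaging) a distribution over deterministic adaptive depth-$q$ decision trees with node labels in $U$; we may assume each tree queries $q$ distinct points along every root-to-leaf path (repeats are removable for free, using $u\ge q$). A leaf $\ell$ is then described by the set $S_\ell\subseteq U$ of size $q$ queried along its path and by the answer vector $y_\ell\in\mathbb{Z}_2^q$: under $\rho$ the path reaches $\ell$ with probability exactly $2^{-q}$, while under $\pi$ it does so with probability $\pi_{S_\ell}(y_\ell)$. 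Call $U$ \emph{good} if every $S\subseteq U$ of size $q$ and every $y\in\mathbb{Z}_2^q$ satisfy $\pi_S(y)<\tfrac65 2^{-q}$; by hypothesis $U$ is good with probability at least $3/4$.

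\textbf{The multiplicative comparison and conclusion.} The crucial step is a \emph{multiplicative} comparison on good $U$. For good $U$ and any deterministic tree $D$, summing over the accepting leaves, $\prob_{f\sim\pi}[D\text{ accepts }f]=\sum_{\ell}\pi_{S_\ell}(y_\ell)<\tfrac65\sum_{\ell}2^{-q}=\tfrac65\,\prob_{f\sim\rho}[D\text{ accepts }f]$, and averaging over the trees comprising $T$ gives $A_\pi^U\le\tfrac65 A_\rho^U$, where $A_\mu^U$ is the acceptance probability of $T$ conditioned on sample set $U$ and input law $\mu$. Writing $G$ for the good event and using $A_\pi^U\le1$ on $G^c$,
\[
\begin{aligned}
\tfrac23\le\E_{f\sim\pi}[a(f)]=\E_U\!\left[A_\pi^U\right]
&\le\tfrac65\,\E_U\!\left[A_\rho^U\,\mathbf 1_G\right]+\prob[G^c]\\
&\le\tfrac65\,\E_{f\sim\rho}[a(f)]+\tfrac14
\le\tfrac65\!\left(\tfrac13+o(1)\right)+\tfrac14=\tfrac{13}{20}+o(1),
\end{aligned}
\]
which is false for $n$ large, since $\tfrac23-\tfrac{13}{20}=\tfrac1{60}>0$. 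Hence no such tester exists, i.e.\ $Q^a(\mathcal{P},u)\ge q$.

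\textbf{Where the difficulty lies.} The main obstacle — and the reason the constants $\tfrac65$ and $\tfrac34$ in the hypothesis are exactly these — is this comparison step. The naive route bounds the total variation distance between the two leaf distributions on good $U$ by $\tfrac15$ (each leaf has $\pi$-mass below $2^{-q}+\tfrac15 2^{-q}$, and there are $2^q$ leaves), yielding a distinguishing advantage of only about $\tfrac15+\tfrac14>\tfrac13$, which is too weak to contradict soundness-plus-completeness. Replacing this additive bound by the \emph{multiplicative} bound $A_\pi^U\le\tfrac65 A_\rho^U$ — legitimate because $\pi_S(y)<\tfrac65 2^{-q}$ dominates \emph{every} leaf probability, and hence every union of leaves, by the same factor — is what makes the arithmetic close. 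The remaining ingredients (reducing randomized adaptive testers to deterministic decision trees, normalizing to $q$ distinct queries, and checking the $o(1)$ from nontriviality is eventually below $\tfrac1{60}$) are routine.
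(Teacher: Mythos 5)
Your proof is correct and follows exactly the route the paper intends: the paper cites the lemma from~\cite{BBBY13} and justifies it only by the one-line remark that under its hypotheses $q$ queries cannot distinguish $f\sim\pi$ from a uniformly random Boolean function, which is precisely the indistinguishability argument you carry out in full (including the necessary multiplicative, rather than additive, comparison of leaf probabilities). No discrepancies to report.
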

The proof is based on the fact that, under the assumptions of the
lemma, $q$ queries do not suffice to distinguish between a function
from the distribution $\pi$ and a uniform random Boolean function.

According to Lemma~\ref{lem:active-testing-dimension}, our goal
is therefore to show that when we choose a random $k$-linear function,
querying it at $o(k\log n)$ queries chosen from a random space will
appear rather random. To this end we use Lemma~\ref{lem:main}, which,
roughly speaking, assures us that the probability of seeing a given
output vector is very concentrated around the expectation. The proof
of the lemma uses the Talagrand inequality (with an extra twist) and
the Erd\H{o}s--Rado $\Delta$-systems method. Lemma~\ref{lem:main},
its proof, and the tools used appear in Section~\ref{sec:proof-of-main-lemma}.

The following theorem provides a lower bound for active testing of
$k$-linear functions, completing the proof of Theorem~\ref{thm:k-linear-active}
(assuming Lemma~\ref{lem:main}).
\begin{thm}
\label{thm:k-linear-active-LB} $Q^{a}\left(\mathrm{Lin}_{k},u\right)=\Omega\left(k\log n\right)$
for $k \leq 0.1 \log n/\log\log n$, as long as $n\le u\le
%\exp(n^{1/7k})$ .
2^{n^{1/7k}}$.
\end{thm}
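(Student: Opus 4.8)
The plan is to invoke Lemma~\ref{lem:active-testing-dimension} with $\mathcal{P}=\mathrm{Lin}_k$ and $\pi$ the uniform distribution over the $\binom nk$ functions $x\mapsto\sum_{i\in A}x_i$ with $|A|=k$. Nontriviality is standard: a random Boolean function is $0.01$-far from every $k$-linear function with probability $1-o(1)$ by a union bound over $\binom nk\cdot 2^{O(2^n)}$ \dots\ actually simply because $|\mathrm{Lin}_k|=\binom nk$ and each $k$-linear function has only a $2^{-\Omega(2^n)}$ chance of being $0.01$-close to a fixed random function. So the whole content is to verify the concentration hypothesis: with probability $\ge\frac34$ over a random $u$-subset $U\subseteq\mathbb Z_2^n$, every $S=\{x_1,\dots,x_q\}\subseteq U$ and every $y\in\mathbb Z_2^q$ satisfy $\pi_S(y)<\frac65 2^{-q}$, for some $q=\Theta(k\log n)$.

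The key observation is an algebraic reformulation of $\pi_S(y)$. Fix $S$. Drawing $f\sim\pi$ means drawing a uniformly random weight-$k$ vector $a\in\mathbb Z_2^n$ and setting $f(x)=\langle a,x\rangle$; then $\pi_S(y)=\prob_a[Ma=y]$ where $M$ is the $q\times n$ matrix whose rows are $x_1,\dots,x_q$. If the rows of $M$ were linearly independent and $a$ were uniform over all of $\mathbb Z_2^n$, this probability would be exactly $2^{-q}$; the two sources of deviation are (i) $a$ is restricted to the slice of weight-$k$ vectors, and (ii) possible linear dependencies among rows of $S$. Point (ii) is the reason for sampling $U$ at random and only needing the bound to hold with probability $\frac34$: for a random $U$, with high probability no small set of its points is linearly dependent over $\mathbb Z_2$ in a ``bad'' way (one should check that any $q=\Theta(k\log n)$ points from a random $u$-set with $u\le\exp(n^{1/7k})$ are ``$\Delta$-system spread out'' enough — this is presumably exactly what the Erd\H os--Rado machinery in Lemma~\ref{lem:main} buys us, ruling out the combinatorial configurations among the $x_i$ that would make $\pi_S(y)$ spike). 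For point (i), one writes $\prob_a[Ma=y]$ via Fourier/character sums over the weight-$k$ slice: expanding the indicator of the slice and of the linear constraints, $\pi_S(y)$ becomes an average of Krawtchouk-type exponential sums, and Lemma~\ref{lem:main} should say precisely that these are all within a $(1\pm\frac15)$ factor of $2^{-q}$ once $q$ and the geometry of $S$ are controlled, using a Talagrand concentration argument on the random slice vector $a$.

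So the proof skeleton is: (1) reduce, via Lemma~\ref{lem:active-testing-dimension}, to establishing the $\pi_S(y)<\frac65 2^{-q}$ event with probability $\ge\frac34$; (2) fix the budget $q=ck\log n$ for a suitable small constant $c>0$; (3) condition on the high-probability event (over the choice of $U$) that $U$ contains no ``degenerate'' configuration of $\le q$ points — here $n\le u$ guarantees enough points to even speak of $q=\Theta(k\log n)$ queries and $u\le\exp(n^{1/7k})$ keeps the number of potential bad configurations small enough to union-bound over, which is where the $1/7k$ exponent comes from; (4) on that event, bound $\pi_S(y)$ by directly citing Lemma~\ref{lem:main} applied with the abelian group $\mathbb Z_2^n$, the subset being the weight-$k$ slice, and the query set $S$. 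Step (4) is where all the real work (Talagrand $+$ $\Delta$-systems) lives, but it is packaged in Lemma~\ref{lem:main}, which we are entitled to assume.

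The main obstacle, conceptually, is getting the quantifier order right in step (3)--(4): Lemma~\ref{lem:main} presumably controls $\pi_S(y)$ for a \emph{fixed} ``nice'' $S$, whereas we need it simultaneously for \emph{all} $\binom uq\le u^q\le\exp(qn^{1/7k})$ choices of $S\subseteq U$ and all $2^q$ vectors $y$. Making the union bound close therefore forces $q=\Theta(k\log n)$ to interact with the $\exp(n^{1/7k})$ cap on $u$ in just the right way, and one must confirm that the failure probability in Lemma~\ref{lem:main} decays fast enough (superexponentially in $q$, say like $\exp(-n^{\Omega(1/k)})$ per bad configuration) to survive multiplication by $u^q 2^q$ and still come out below $\frac14$. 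I expect that is the step whose constants need the most care; everything else is bookkeeping around Lemma~\ref{lem:active-testing-dimension} and the definition of $\pi$.
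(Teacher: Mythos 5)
Your top-level skeleton matches the paper's: take $\pi$ uniform on $\mathrm{Lin}_{k}$, reduce via Lemma~\ref{lem:active-testing-dimension} to showing $\pi_{S}(y)<\frac{6}{5}2^{-q}$ simultaneously for all $q$-subsets $S\subseteq U$ and all $y\in\mathbb{Z}_{2}^{q}$ with $q\approx\left(1-\frac{1}{k}\right)\log\binom{n}{k}$, and close with a union bound over $\binom{u}{q}\cdot2^{q}$ events --- which is indeed exactly where the cap $u\le\exp(n^{1/7k})$ is spent, as you correctly anticipate.

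The gap is in the middle step, where you misidentify what Lemma~\ref{lem:main} concentrates and over which randomness. Writing $\pi_{S}(y)=\prob_{a}\left[Ma=y\right]$ for a random weight-$k$ vector $a$ is a valid reformulation, but for a \emph{fixed} $S$ this is a deterministic number; ``Talagrand concentration on the random slice vector $a$'' and Krawtchouk/character expansions over the weight-$k$ slice cannot show that a deterministic quantity is close to $2^{-q}$. What must be shown is that $\pi_{S}(y)$ concentrates over the randomness of $S$. The paper's key move is to transpose the picture: view $S$ as a $q\times n$ matrix whose columns $c_{1}(S),\ldots,c_{n}(S)$ are $n$ independent uniform elements of the group $G=\mathbb{Z}_{2}^{q}$ (not $\mathbb{Z}_{2}^{n}$), and observe that $\pi_{S}(y)=\binom{n}{k}^{-1}\bigl|\{I\in\binom{[n]}{k}:\sum_{i\in I}c_{i}(S)=y\}\bigr|$. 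Lemma~\ref{lem:main} is precisely a concentration statement for this count around its mean $\binom{n}{k}2^{-q}$, with Talagrand applied to the product space of the $n$ columns and with the Erd\H{o}s--Rado $\Delta$-system theorem used \emph{inside} the lemma to truncate the count to a Lipschitz version --- not, as you suggest, to rule out ``degenerate configurations'' of query points in $U$. No such conditioning step exists in the paper: the lemma is applied to every $(S,y)$ pair directly, yielding a tail bound of $5\cdot2^{-\lambda}$ with $\lambda\ge q\log u$, which is then union-bounded. Without the column-wise reformulation your step (4) does not connect to Lemma~\ref{lem:main} as stated, and your proposed alternative route through Fourier sums on the slice is not developed enough to substitute for it.
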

\begin{proof}
Define $\pi$ to be the uniform distribution over the $k$-linear
functions. In particular, $\pi$ is the distribution obtained by choosing
distinct $i_{1},i_{2},\ldots,i_{k}\in\left[n\right]$ uniformly at
random and returning the function $f:\mathbb{Z}_{2}^{n}\to\mathbb{Z}_{2}$
defined by $f\left(x\right)=x_{i_{1}}+x_{i_{2}}+\cdots+x_{i_{k}}$.
Fix $S$ to be a set of $q$ vectors in $\mathbb{Z}_{2}^{n}$. This
set can be viewed as a $q\times n$ Boolean-valued matrix. We write
$c_{1}\left(S\right),\ldots,c_{n}\left(S\right)$ to represent the
columns of this matrix. For any $y\in\mathbb{Z}_{2}^{q}$,
\[
\pi_{S}\left(y\right)=\binom{n}{k}^{-1}\left|\left\{ I\in\binom{\left[n\right]}{k}:\sum_{i\in I}c_{i}\left(S\right)=y\right\} \right|\ .
\]

Since $\mathrm{Lin}_{k}$ is, say, $0.4$-nontrivial, by Lemma~\ref{lem:active-testing-dimension},
to prove that $Q^{a}(\mathrm{Lin}_{k},u)=\Omega\left(k\log n\right)$,
it suffices to show that when $U$ is a set of $u$ vectors chosen
uniformly and independently at random from $\mathbb{Z}_{2}^{n}$ and,
say, $q=\left(1-\frac{1}{k}\right)\log\binom{n}{k}+k$, then with
probability at least $\frac{3}{4}$, every set $S\subset U$ of size
$\left|S\right|=q$ and every $y\in\mathbb{Z}_{2}^{q}$ satisfy $\pi_{S}\left(y\right)\le\frac{6}{5}2^{-q}$.
To this end, we would like to show that $\pi_{S}(y)$ is highly concentrated
around $\E[\pi_{S}(y)]=2^{-q}$.

To apply Lemma~\ref{lem:main}, consider the group $G=\mathbb{Z}_{2}^{q}$
and let $N=\left|G\right|=2^{q}=2^{k}\binom{n}{k}^{1-1/k}$. 
By monotonicity,
we assume $u=\lfloor 2^{n^{1/7k}}\rfloor\ge n$ 
and let $\lambda=\lceil qn^{1/7k}\rceil\ge q\log u$.
Now, for large enough $n$ conditions~(\ref{eq:lemma-condition-a}) and~(\ref{eq:lemma-condition-b})
of the lemma hold. Indeed, to prove the first inequality note that
\begin{equation}
\label{e91}
800\ln2\cdot kN\lambda^{2k+1}
 = 800 \ln 2 \binom{n}{k}^{1-1/k} k2^{k}\lceil q n^{1/7k}\rceil ^{2k+1}
 \le 
800 \binom{n}{k}^{1-1/k} \cdot (2qn^{1/7k})^{2k+1}.
\end{equation}
Since $k<0.1 \log n/\log \log n$ and $q < 2k \log n < (\log n)^2$,
we have
$$
800 (2qn^{1/7k})^{2k+1} =o(\sqrt n)<n/k <\binom{n}{k}^{1/k}.
$$
Therefore, the right-hand-side of (\ref{e91}) is smaller than
$\binom{n}{k}$, establishing (\ref{eq:lemma-condition-a}).

To prove the second inequality note that
\begin{equation}
\label{e92}
\frac{\lambda N}{k 2^k} =\frac{\lambda \binom{n}{k}^{1-1/k}}
{k} \geq \frac{q n^{1/7k} \binom{n}{k}}{\binom{n}{k}^{1/k} k}
=\frac{(n-k+1) q n^{1/7k}}{k \binom{n}{k}^{1/k} k}
\binom{n}{k-1}.
\end{equation}
However, 
$$
\frac{(n-k+1) q n^{1/7k}}{k \binom{n}{k}^{1/k} k}
\geq \Omega\left( \frac{n k \log n }{nk} \right) >1,
$$
and therefore the right-hand-side of (\ref{e92}) is bigger than
$\binom{n}{k-1}$, proving (\ref{eq:lemma-condition-b}).

Thus, for any fixed vector $y\in\mathbb{Z}_{2}^{q}$,
the probability that more than $\frac{6}{5}\binom{n}{k}2^{-q}$ $k$-sets
of columns of $S$ sum to $y$ is at most $5\cdot2^{-\lambda}$. Furthermore,
when $U$ is defined as above, we can apply the union bound over all
$y\in G$ and over all subsets $S\subseteq U$ of size $\left|S\right|=q$
to obtain 
\[
\prob\left[\exists S,y:\pi_{S}\left(y\right)>\frac{6}{5}2^{-q}\right]\le\binom{u}{q}\cdot2^{q}\cdot5\cdot2^{-\lambda}\le\frac{u^{q}}{q!}\cdot2^{q}\cdot5\cdot2^{-q\log u}=o\left(1\right),
\]
establishing the theorem.
\end{proof}
The above theorem and its proof immediately imply a lower bound for
active testing of both $k$-juntas and $(n-k)$-symmetric functions.
This can also be applied to show lower bounds for other concise representation
families, such as small DNF formulas, small decision trees, small
Boolean formulas, and small Boolean circuits (see \cite{DLMORSW07}).
\begin{cor}
\label{cor:juntas-active-LB}$Q^{a}(\mathrm{Jun}_{k})=\Omega(k\log n)$
and $Q^{a}(\mathrm{Sym}_{n-k})=\Omega(k\log n)$ for $k=O\left(\log n/\log\log n\right)$.\end{cor}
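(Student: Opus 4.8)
The plan is to rerun the proof of Theorem~\ref{thm:k-linear-active-LB} with essentially no changes, exploiting the inclusions $\mathrm{Lin}_k \subseteq \mathrm{Jun}_k \subseteq \mathrm{Sym}_{n-k}$: a $k$-linear function depends on exactly $k$ coordinates, hence is a $k$-junta, and a $k$-junta is unchanged by any permutation of the at least $n-k$ coordinates it ignores, hence is $(n-k)$-symmetric. I would keep $\pi$ to be the uniform distribution over $\mathrm{Lin}_k$ used there; since $\mathrm{Lin}_k$ lies in both larger families, this same $\pi$ is a distribution supported on $\mathrm{Jun}_k$ and on $\mathrm{Sym}_{n-k}$. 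The crucial point is that the concentration estimate established in that proof --- namely that, with probability at least $\tfrac34$ over a random $u$-sample $U$ with $n \le u \le \exp(n^{1/7k})$, every subset $S \subseteq U$ of size $q$ and every $y \in \mathbb{Z}_2^q$ satisfy $\pi_S(y) \le \tfrac65 2^{-q}$, where $q = (1-\tfrac1k)\log\binom{n}{k} + k = \Omega(k\log n)$ is as there --- is a statement about $\pi$ and $U$ only, making no reference to the property being tested. It therefore holds verbatim when $\pi$ is regarded as a distribution on $\mathrm{Jun}_k$ or on $\mathrm{Sym}_{n-k}$.

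To apply Lemma~\ref{lem:active-testing-dimension} to these families, the only remaining obligation is to check that each is nontrivial, i.e., that a uniformly random Boolean function is $0.01$-far from it with probability $1-o(1)$; since $\mathrm{Jun}_k \subseteq \mathrm{Sym}_{n-k}$, it suffices to verify this for $\mathrm{Sym}_{n-k}$. Here I would use a crude count: an $(n-k)$-symmetric function is determined by the choice of its at most $k$ asymmetric coordinates together with, for each of the $2^k$ assignments to those coordinates, the value of the function as a function of the Hamming weight (in $\{0,1,\dots,n-k\}$) of the remaining block, so $|\mathrm{Sym}_{n-k}| \le \binom{n}{k} 2^{2^k(n-k+1)} \le 2^{O(n^2)} = 2^{o(2^n)}$, using $2^k \le n$ in the regime $k = O(\log n/\log\log n)$. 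Consequently the number of Boolean functions that are $0.01$-close to some member of $\mathrm{Sym}_{n-k}$ is at most $|\mathrm{Sym}_{n-k}| \cdot \sum_{i \le 0.01\cdot 2^n}\binom{2^n}{i} \le 2^{o(2^n)} \cdot 2^{H(0.01)\,2^n} = 2^{(H(0.01)+o(1))2^n}$, where $H$ is the binary entropy function; since $H(0.01) < 1$ this is a $2^{-\Omega(2^n)}$ fraction of all $2^{2^n}$ Boolean functions, so a random function is $0.01$-far from $\mathrm{Sym}_{n-k}$, and a fortiori from $\mathrm{Jun}_k$, with probability $1-o(1)$.

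Combining these two ingredients, Lemma~\ref{lem:active-testing-dimension} gives $Q^a(\mathrm{Jun}_k, u) = \Omega(k\log n)$ and $Q^a(\mathrm{Sym}_{n-k}, u) = \Omega(k\log n)$ for every $u$ with $n \le u \le \exp(n^{1/7k})$; specializing to a polynomial-size sample $u$ (possible for $k = O(\log n/\log\log n)$) yields the claimed bounds on $Q^a(\mathrm{Jun}_k)$ and $Q^a(\mathrm{Sym}_{n-k})$. I do not expect a real obstacle here: the whole concentration argument is imported unchanged from Theorem~\ref{thm:k-linear-active-LB}, so the only genuinely new step is the elementary counting bound on $|\mathrm{Sym}_{n-k}|$ behind the nontriviality check, and the hypothesis $k = O(\log n/\log\log n)$ itself enters only through Theorem~\ref{thm:k-linear-active-LB}, to ensure the existence of a polynomial-size sample.
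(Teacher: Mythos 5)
Your proposal is correct and matches the paper's (implicit) argument: the paper derives the corollary directly from Theorem~\ref{thm:k-linear-active-LB} via the containments $\mathrm{Lin}_k\subseteq\mathrm{Jun}_k\subseteq\mathrm{Sym}_{n-k}$, since the distribution $\pi$ on $k$-linear functions and the concentration bound on $\pi_S(y)$ carry over verbatim to the larger families. Your explicit verification of nontriviality of $\mathrm{Sym}_{n-k}$ by counting is a detail the paper leaves unstated, but it is the right (and only) additional check needed.
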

\begin{proof}
The same distribution $\pi$ from the proof of Theorem~\ref{thm:k-linear-active-LB}
(uniform distribution over the $k$-linear functions) is supported
on $k$-juntas (resp., $\left(n-k\right)$-symmetric functions) and
these properties, too, are still $0.4$-nontrivial.
\end{proof}
In Section~\ref{sec:Partially-symmetric-functions} we continue the
investigation of active and passive testing of partially symmetric
functions. The following proposition summarizes what we know about
passive testing of $k$-juntas.
\begin{prop}
\label{prop:juntas-passive-UB-and-LB} $\Omega(2^{k/2}+k\log n)\le Q^{p}(\mathrm{Jun}_{k})\le O\left(2^{k}+k\log n\right)$.\end{prop}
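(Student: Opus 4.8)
The upper bound $Q^{p}(\mathrm{Jun}_{k}) \le O(2^{k} + k\log n)$ follows immediately from the general machinery already available: there are $\binom{n}{k} \cdot 2^{2^{k}}$ distinct $k$-juntas (choose the $k$ relevant coordinates, then an arbitrary truth table on those coordinates), so $\log|\mathrm{Jun}_{k}| = O(2^{k} + k\log n)$, and combining Proposition~\ref{prop:passive-testing-upper-bound-by-learning} with Fact~\ref{fact:simple-PAC-UB} (at error parameter $\epsilon/2$, a constant) yields $Q^{p}(\mathrm{Jun}_{k}) \le Q^{\ell}_{\epsilon/2}(\mathrm{Jun}_{k}) + O(1/\epsilon) = O(2^{k} + k\log n)$. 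This half is essentially a bookkeeping exercise.

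For the lower bound, the term $k\log n$ is already handed to us: by Proposition~\ref{prop:classic-active-passive} we have $Q^{p}(\mathrm{Jun}_{k}) \ge Q^{a}(\mathrm{Jun}_{k}) = \Omega(k\log n)$ from Corollary~\ref{cor:juntas-active-LB}. So the work is to prove $Q^{p}(\mathrm{Jun}_{k}) = \Omega(2^{k/2})$, and then take the maximum of the two bounds (which is within a constant factor of their sum). The plan is an indistinguishability argument against the uniform distribution: let $\pi$ be the distribution on $\mathrm{Jun}_{k}$ obtained by picking a uniformly random set $J \in \binom{[n]}{k}$ of relevant coordinates and a uniformly random function $h:\mathbb{Z}_2^{J} \to \mathbb{Z}_2$ on them. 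I will argue that with $q = c\,2^{k/2}$ random examples $(x^{(1)}, f(x^{(1)})), \ldots, (x^{(q)}, f(x^{(q)}))$, a tester cannot distinguish $f \sim \pi$ from a uniformly random Boolean function $f$. A random Boolean function is $0.01$-far from every $k$-junta with probability $1 - o(1)$ (standard counting, using $|\mathrm{Jun}_k| \le 2^{o(2^n)}$ since $k = O(\log n/\log\log n)$), so $\mathrm{Jun}_{k}$ is nontrivial in the sense of the footnote, and any passive tester must accept the $\pi$-side and reject the random side, giving a contradiction once indistinguishability is established.

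The heart of the argument is a birthday-style collision bound. Condition on the $q$ sample points $x^{(1)}, \ldots, x^{(q)}$ (drawn uniformly from $\mathbb{Z}_2^n$, independent of $f$). For $f \sim \pi$, the labels $f(x^{(i)}) = h(x^{(i)}|_J)$ depend only on the projections $x^{(i)}|_J \in \mathbb{Z}_2^{J}$; if all $q$ projections are distinct, then because $h$ is a uniformly random function on $\mathbb{Z}_2^{J}$, the label vector $(f(x^{(1)}), \ldots, f(x^{(q)}))$ is exactly uniform on $\mathbb{Z}_2^{q}$ — the same distribution produced by a uniformly random $f$. Thus the two scenarios are statistically identical unless two sample points collide on the coordinates in $J$. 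For a fixed pair $i \ne j$, over the random choice of $J$ the collision probability $\Pr_J[x^{(i)}|_J = x^{(j)}|_J]$ is at most $2^{-k}$ whenever $x^{(i)} \ne x^{(j)}$ (it is $\binom{n - d}{k}/\binom{n}{k} \le (1 - d/n)^k$... more simply, each of the $k$ chosen coordinates independently-ish must avoid the $d \ge 1$ positions where $x^{(i)}, x^{(j)}$ differ, giving a bound below $1$; one cleans this up to $O(2^{-k})$ using $d \ge 1$ and a short calculation, or by noting the differing coordinates are a uniformly random $k$-subset and the event is that it misses a fixed nonempty set). A union bound over the $\binom{q}{2} \le q^2/2$ pairs gives total collision probability $\le q^2 \cdot 2^{-k}/2 + o(1)$ (the $o(1)$ absorbing the rare event $x^{(i)} = x^{(j)}$, which has probability $\le q^2 2^{-n}$), which is a small constant, say below $1/12$, once $q \le 2^{k/2}/\sqrt{6}$. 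Hence the statistical distance between the two label distributions is at most $1/12 + o(1) < 1/3$, contradicting the tester's success guarantee. The main obstacle — and the only place real care is needed — is nailing down the $\Pr_J$-collision estimate cleanly (handling the dependence among the $k$ chosen coordinates and the edge case of nearly-equal sample points), but this is a routine hypergeometric computation; everything else is standard plumbing.
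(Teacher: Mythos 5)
Your overall architecture is the same as the paper's: the upper bound via counting the $\binom{n}{k}2^{2^{k}}$ juntas together with Proposition~\ref{prop:passive-testing-upper-bound-by-learning} and Fact~\ref{fact:simple-PAC-UB}; the $\Omega(k\log n)$ term imported from Corollary~\ref{cor:juntas-active-LB}; and an $\Omega(2^{k/2})$ birthday-collision argument showing that, absent two sample points agreeing on the relevant coordinates, a random junta is indistinguishable from a random function. The paper states the collision argument more tersely and with the relevant set fixed to the first $k$ coordinates rather than randomized.

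However, the one step you yourself flag as ``the only place real care is needed'' is handled incorrectly. After conditioning on the sample points, the claim $\prob_{J}\left[x^{(i)}|_{J}=x^{(j)}|_{J}\right]\le2^{-k}$ for all distinct $x^{(i)}\neq x^{(j)}$ is false: if the two points differ in exactly $d$ coordinates, this probability equals $\binom{n-d}{k}/\binom{n}{k}$, which for $d=1$ is $1-k/n$, nowhere near $2^{-k}$. The proposed repair (``one cleans this up to $O(2^{-k})$ using $d\ge1$'') cannot work, and even invoking the high-probability bound $d\ge n/3$ only gives $(2/3)^{k}$ per pair, hence a lower bound of order $(3/2)^{k/2}$ rather than $2^{k/2}$. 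The fix is simply not to condition on the points: for any fixed $k$-set $J$, two independent uniform points agree on $J$ with probability exactly $2^{-k}$ (each coordinate of $J$ independently), so the expected number of colliding pairs among $q$ samples is $\binom{q}{2}2^{-k}$ and Markov's inequality finishes the argument. Equivalently, fix $J=[k]$ as the paper does and drop the randomization of $J$ altogether --- it buys nothing here. With that correction the proof goes through as you outline it.
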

\begin{proof}
The upper bound is obtained by applying Proposition~\ref{prop:passive-testing-upper-bound-by-learning}
and Fact~\ref{fact:simple-PAC-UB}, as the number of $k$-juntas
is $\binom{n}{k}2^{2^{k}}$. The lower bound is a combination of two
separate bounds:%
\footnote{Although Corollary~\ref{cor:juntas-active-LB} only holds for $k=O(\log n/\log\log n)$,
for larger values of $k$ its contribution to the lower bound is negligible.%
} $\Omega\left(k\log n\right)$ by Corollary~\ref{cor:juntas-active-LB}
and $\Omega\left(2^{k/2}\right)$ for verifying that the input function
is indeed a junta, even when the set of the influencing variables
is known in advance. Indeed, assume we are given the input function
with a promise that it is either a random junta over the first $k$
variables or a random function. Distinguishing between these two cases
is impossible unless we have a pair of inputs agreeing on the first
$k$ variables;  among less than $\frac{1}{2}2^{k/2}$ queries, we
get such a pair with probability at most $\frac{1}{2}\left(\frac{1}{2}2^{k/2}\right)^{2}\cdot2^{-k}=1/8$.
\end{proof}

\subsection{Proof of main lemma\label{sec:proof-of-main-lemma}}

Before we state the formal lemma, we introduce the two following combinatorial
and probabilistic tools used in the proof.

\paragraph{Erd\H{o}s--Rado $\Delta$-systems}
\begin{defn*}
Let $a,b$ be positive integers. We say that a family of $a$ sets,
each of size $b$, forms a \emph{$\Delta$-system} of size $a$ if
all pairs have the same intersection.
\end{defn*}
Erd\H{o}s and Rado proved that every large enough family of sets contains
a large $\Delta$-system.
\begin{thm}[{\cite[Theorem 3]{ER60}}]
\label{thm:sunflower}Let $\mathcal{F}$ be a family of sets, each
of size $b$. Then $\mathcal{F}$ contains $\Delta$-system of size
$a$ whenever $\left|\mathcal{F}\right|\ge\left(a-1\right)^{b+1}b!$.
\qed
\end{thm}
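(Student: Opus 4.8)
The plan is to prove this by induction on $b$, the common size of the members of $\mathcal{F}$, via the classical greedy argument. First I would pick a maximal subfamily $F_{1},\dots,F_{t}\in\mathcal{F}$ of pairwise disjoint sets. If $t\ge a$, these already form a $\Delta$-system (with empty common intersection) and we are done. Otherwise $t\le a-1$, so $Y:=F_{1}\cup\dots\cup F_{t}$ has at most $(a-1)b$ elements, and by maximality of the disjoint subfamily every member of $\mathcal{F}$ meets $Y$.

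The heart of the argument is then a pigeonhole step followed by the inductive call. Since every member of $\mathcal{F}$ meets $Y$, some $x\in Y$ lies in at least $|\mathcal{F}|/|Y| \ge (a-1)^{b+1}b! / ((a-1)b) = (a-1)^{b}(b-1)!$ members of $\mathcal{F}$. Deleting $x$ from each of these produces a family $\mathcal{F}'$ of pairwise distinct sets, each of size $b-1$, with $|\mathcal{F}'|\ge (a-1)^{(b-1)+1}(b-1)!$; so by the inductive hypothesis $\mathcal{F}'$ contains a $\Delta$-system $G_{1},\dots,G_{a}$ with common pairwise intersection $C$. Adding $x$ back, the sets $G_{1}\cup\{x\},\dots,G_{a}\cup\{x\}$ are $a$ distinct members of $\mathcal{F}$ whose pairwise intersections all equal $C\cup\{x\}$, i.e.\ a $\Delta$-system of size $a$. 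The base case $b=1$ is handled by the very same reasoning: a maximal disjoint subfamily of singletons that fails to reach $a$ members would force $|\mathcal{F}|\le a-1$, contradicting the hypothesis $|\mathcal{F}|\ge (a-1)^{2}$ for $a\ge 3$, and the residual tiny cases are immediate.

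I do not expect a genuine obstacle here: the argument is entirely elementary, just greedy selection plus pigeonhole plus induction. The only place that warrants care is the bookkeeping in the recursion — the slightly generous hypothesis $|\mathcal{F}|\ge (a-1)^{b+1}b!$ (rather than the tight $(a-1)^{b}b!$) is precisely what makes the pigeonhole quotient $(a-1)^{b+1}b!/((a-1)b)$ collapse exactly to $(a-1)^{(b-1)+1}(b-1)!$, the bound needed for the recursive call on $(b-1)$-sets. Everything else is routine.
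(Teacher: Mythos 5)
Your proof is correct. Note that the paper itself does not prove this statement at all: it is quoted verbatim from Erd\H{o}s and Rado \cite{ER60} as a black box, so there is no internal proof to compare against. What you have written is the standard Erd\H{o}s--Rado argument (maximal disjoint subfamily, pigeonhole on the union, induct on $b$ after deleting a popular element), and the bookkeeping checks out exactly as you say: $(a-1)^{b+1}b!\,/\,((a-1)b)=(a-1)^{b}(b-1)!$, which is precisely the hypothesis needed for the recursive call on $(b-1)$-sets, and the sets obtained by deleting $x$ are indeed pairwise distinct since they all contained $x$. The only blemish is the dismissal of the ``residual tiny cases'': with the non-strict inequality as stated, the claim is genuinely vacuous or false for $a\le 2$ (e.g.\ $a=2$, $b=1$ gives $|\mathcal{F}|\ge 1$, and a single singleton contains no $\Delta$-system of size $2$), so one should either assume $a\ge 3$ or read the bound as strict; this is irrelevant to the paper's application, where $a$ is large.
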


\paragraph{Talagrand's concentration inequality\protect \\
}

In its general form, Talagrand's inequality is an isoperimetric inequality
for product probability spaces. We use the following formulation from~\cite{MR-book}
(see also~\cite{AS-book,Tal95}), suitable for showing that a random
variable in a product space is concentrated around its expectation
under two conditions:
\begin{thm}[{\cite[page 81]{MR-book}}]
\label{thm:talagrand-ineq}Let $X\ge0$ be a non-trivial random variable,
determined by $n$ independent trials $T_{1},\ldots,T_{n}$. If there
exist $c,r>0$ such that
\begin{enumerate}[label=(\roman*)]
\item $X$ is $c$-Lipschitz: changing the outcome of one trial can affect
$X$ by at most $c$, and
\item $X$ is $r$-certifiable: for any $s$, if $X\ge s$ then there is
a set of at most $rs$ trials whose outcomes certify that $X\ge s$,
\end{enumerate}
then for any $0\le t\le\E\left[X\right]$,
\[
\prob\left[\left|X-\E\left[X\right]\right|>t+60\sqrt{\tau}\right]<4\exp\left(-t^{2}/8\tau\right),
\]
where $\tau=c^{2}r\E\left[X\right]$.
\qed
\end{thm}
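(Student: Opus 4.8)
The plan is to deduce this from Talagrand's isoperimetric inequality for the convex (weighted Hamming) distance, treating that inequality as the single external input and obtaining everything else by a reduction from the two hypotheses on $X$. Writing the product space as $\Omega=\prod_{i=1}^{n}\Omega_{i}$, for $x\in\Omega$ and $A\subseteq\Omega$ one sets
\[
  d_{T}(x,A)=\sup_{\alpha:\ \sum_{i}\alpha_{i}^{2}=1}\ \inf_{y\in A}\ \sum_{i:\,x_{i}\neq y_{i}}\alpha_{i},
\]
so that Talagrand's theorem reads $\prob[A]\cdot\prob\bigl[d_{T}(\cdot,A)\ge s\bigr]\le e^{-s^{2}/4}$ for every $A$ and every $s\ge 0$. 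From this I would first extract clean tail bounds around a \emph{median} $m$ of $X$, and only afterwards pay the $O(\sqrt{\tau})$ price of recentring at $\E[X]$.

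\emph{Tail bounds around the median.} For the upper tail I would take $A=\{X\le m\}$, so $\prob[A]\ge 1/2$. Fix $x$ with $X(x)=b\ge m$; by $r$-certifiability there is a set $J$ of at most $rb$ coordinates such that every $z$ agreeing with $x$ on $J$ has $X(z)\ge b$. For any $y\in A$, let $y'$ agree with $x$ on $J$ and with $y$ off $J$; then $X(y')\ge b$ while $y'$ and $y$ differ in exactly $\bigl|\{i\in J:x_{i}\neq y_{i}\}\bigr|$ coordinates, so $c$-Lipschitzness forces $\bigl|\{i\in J:x_{i}\neq y_{i}\}\bigr|\ge (b-m)/c$. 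Taking $\alpha$ equal to $1/\sqrt{|J|}$ on $J$ and $0$ elsewhere then gives $d_{T}(x,A)\ge (b-m)/(c\sqrt{rb})$ for every such $x$, and since $t\mapsto (t-m)/\sqrt{t}$ is increasing this persists for $X(x)\ge b$. Hence $\prob[X\ge b]\le 2\exp\bigl(-(b-m)^{2}/(4c^{2}rb)\bigr)$. The lower tail is symmetric: take $A=\{X\ge m\}$ and, at a point with $X(x)=b\le m$, a certificate for the event $X\ge m$, which yields $\prob[X\le b]\le 2\exp\bigl(-(m-b)^{2}/(4c^{2}rm)\bigr)$.

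\emph{Recentring at the mean.} Set $\tau=c^{2}r\,\E[X]$. Integrating the two bounds above (and bounding $rb$ by $O(r\E[X])$ in the upper-tail exponent, legitimate once the deviation is $O(\E[X])$) shows that $X-m$ has tail scale $O(\sqrt{\tau})$, hence $|\E[X]-m|\le C\sqrt{\tau}$ for an absolute constant $C$. Substituting $m=\E[X]+O(\sqrt{\tau})$ and $b=\E[X]\pm t$ into the median-tail bounds, and again replacing $b$ by $\E[X]$ in the exponent whenever $t\le\E[X]$, turns them into the asserted estimate $\prob\bigl[\,|X-\E[X]|>t+60\sqrt{\tau}\,\bigr]<4\exp(-t^{2}/8\tau)$: the additive slack $60\sqrt{\tau}$ and the constant $8$ in the exponent are exactly the room needed to absorb $C\sqrt{\tau}$ together with the constant-factor loss incurred when $b$ is traded for $\E[X]$, and the hypothesis $t\le\E[X]$ is what makes that trade harmless.

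\emph{Main obstacle.} The conceptual step is Talagrand's convex-distance isoperimetric inequality, which I would import wholesale; the remaining work is bookkeeping, and its one delicate feature is that the argument naturally produces the exponent $(b-m)^{2}/(4c^{2}rb)$ with $b$ — not $m$ or $\E[X]$ — in the denominator. Everything therefore hinges on (i) staying in the regime $t\le\E[X]$ so that $rb$ may be replaced by $O(r\E[X])$, and (ii) controlling $|m-\E[X]|$ tightly enough that it vanishes into the $60\sqrt{\tau}$ allowance. Propagating the absolute constants through this recentring so that they collapse to precisely the stated $60$ and $8$ is the only part requiring real care; it is routine but unilluminating, which is presumably why the paper quotes the inequality from a textbook rather than reproving it.
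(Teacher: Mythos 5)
This theorem is imported verbatim from \cite{MR-book} and the paper offers no proof of its own, so there is nothing internal to compare against; your sketch is the standard derivation of the Molloy--Reed formulation from Talagrand's convex-distance isoperimetric inequality, which is precisely the route taken in the cited source. The outline (certificate plus Lipschitz bound giving $d_{T}(x,\{X\le m\})\ge(b-m)/(c\sqrt{rb})$, symmetrically for the lower tail, then trading the median for the mean at a cost of $O(\sqrt{\tau})$ using $m\le 2\E[X]$) is correct, and the only unfinished business is the constant bookkeeping you explicitly defer.
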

We now state the main lemma. 
\begin{lem}
\label{lem:main}Let $G$ be an abelian group of order $N$, and let
$n\in\mathbb{N}$. Consider a random sequence $X=\left(x_{1},x_{2},\ldots,x_{n}\right)$,
where each $x_{i}\in G$ is chosen uniformly and independently at
random (with repetitions). Fix $y\in G$ and $k\in\mathbb{N}$, and
let $Y=\left|\mathcal{Y}\right|$, where \textup{$\mathcal{Y}=\left\{ I\in\binom{\left[n\right]}{k}:\sum_{i\in I}x_{i}=y\right\} $}.
Let $\lambda\ge2\log N$ be a positive integer and assume that\begin{subequations}
\begin{alignat}{1}
\binom{n}{k} & \ge800\ln2\cdot kN\lambda^{2k+1};\mbox{ and}\label{eq:lemma-condition-a}\\
\binom{n}{k-1} & \le\frac{\lambda N}{k2^{k}}.\label{eq:lemma-condition-b}
\end{alignat}
\end{subequations}Then, 
\[
\mathrm{\prob}\left[\left|Y-\E\left[Y\right]\right|>\frac{1}{5}\E\left[Y\right]\right]<5\cdot2^{-\lambda}.
\]
\end{lem}
\begin{proof}
For $k=1$ we have $Y\sim{\rm Bin}\left(n,1/N\right)$ and the result
is implied by Chernoff's inequality, so we henceforth assume $k\ge2$.
We would like to use Talagrand's inequality to prove that $Y$ is
concentrated around $\E\left[Y\right]$, but $Y$ does not satisfy
the Lipschitz condition necessary for its application. Let us thus
define $\hat{Y}=\left|\hat{\mathcal{Y}}\right|$, where $\hat{\mathcal{Y}}\subseteq\mathcal{Y}$
is maximal such that, for all $j\in\left[n\right]$, $x_{j}$ belongs
to at most $c$ sets $I\in\hat{\mathcal{Y}}$; the exact value of
$c$ will be determined later.

First we bound the probability that $\hat{Y}\neq Y$. Let $\mathcal{Y}_{j}=\left\{ I\in\binom{\left[n\right]\setminus\left\{ j\right\} }{k-1}:I\cup\left\{ j\right\} \in\mathcal{Y}\right\} $.
By Theorem~\ref{thm:sunflower}, there exists a $\Delta$-system
$\mathcal{Z}_{j}\subseteq\mathcal{Y}_{j}$ of size 
\[
\left|\mathcal{Z}_{j}\right|\ge1+\sqrt[k]{\left|\mathcal{Y}_{j}\right|/\left(k-1\right)!}>\left|\mathcal{Y}_{j}\right|^{1/k}e/k,
\]
where every two distinct $I_{1},I_{2}\in\mathcal{Z}_{j}$ have the
same intersection $K_{j}=I_{1}\cap I_{2}$. Thus, $\mathcal{Z}_{j}'=\left\{ I\setminus K_{j}:I\in\mathcal{Z}_{j}\right\} $
is a collection of $s_{j}=\left|\mathcal{Z}'_{j}\right|=\left|\mathcal{Z}_{j}\right|$
disjoint $k'$-sets such that $\sum_{i\in I}x_{i}=z$ for all $I\in\mathcal{Z}_{j}'$,
where $k'=k-1-\left|K_{j}\right|\le k-1$ and $z=y-x_{j}-\sum_{i\in K_{j}}x_{i}$.

Consider the event $E_{z}\left(s\right)$, defined as the existence
of a collection of $s$ disjoint $k'$-subsets of $X$ that all sum
to the same element $z\in G$. Then,
\[
\prob\left[E_{z}\left(s\right)\right]\le\binom{n}{\underbrace{k',k',\ldots,k'}_{s}}N^{-s}\le\frac{1}{s!}\binom{n}{k'}^{s}N^{-s}\le\left(\frac{e}{sN}\binom{n}{k-1}\right)^{s}\le\left(\frac{e\lambda}{sk2^{k}}\right)^{s},
\]
and thus we have, for the choice of $c=\lambda^{k}$,

\begin{equation}
\begin{aligned}\prob\left[Y>\hat{Y}\right] & \le\prob\left[\exists j\in\left[n\right]:\left|\mathcal{Y}_{j}\right|>c\right]\le\prob\left[\exists j\in\left[n\right]:s_{j}>c^{1/k}e/k\right]\\
 & \le\prob\left[\exists z\in G:E_{z}\left(e\lambda/k\right)\right]\le N\left(2^{-k}\right)^{e\lambda/k}=2^{\log N-e\lambda}<2^{-2\lambda}.
\end{aligned}
\label{eq:almost-surely-equal}
\end{equation}
This also serves to show that $\E\left[Y\right]$ and $\E[\hat{Y}]$
are very close, since 
\begin{equation}
\E\left[Y-\hat{Y}\right]\le\max\left(Y-\hat{Y}\right)\cdot\prob\left[Y>\hat{Y}\right]\le\binom{n}{k}2^{-2\lambda}\le\binom{n}{k-1}^{2}2^{-2\lambda}\le\left(\frac{\lambda N}{k2^{k}}\right)^{2}2^{-2\lambda}\le\frac{\lambda^{2}2^{-\lambda}}{64}<\frac{1}{32}.\label{eq:same-expectancy}
\end{equation}

Next we apply Talagrand's inequality to bound the deviation of $\hat{Y}$
from $\E[\hat{Y}]$. By definition, $\hat{Y}$ is $c$-Lipschitz;
moreover, to prove that $\hat{Y}\ge s$ we only need to reveal $s$
$k$-sets, i.e., reveal $x_{i}$ for at most $ks$ values of $i$.
For every choice of $I\in\binom{\left[n\right]}{k}$, $\sum_{i\in I}x_{i}$
is a random element of $G$ and thus $\E\left[Y\right]=\binom{n}{k}/N\ge800\ln2\cdot k\lambda^{2k+1}$.

Set $\tau=c^{2}k\E[\hat{Y}]$. By Theorem~\ref{thm:talagrand-ineq},
\begin{equation}
\begin{aligned}\prob\left[\left|\hat{Y}-\E[\hat{Y}]\right|>\frac{1}{10}\E[Y]+60\sqrt{\tau}\right] & \le4\exp\left(-\E[Y]^{2}/800\tau\right)\\
 & \le4\exp\left(-\E[Y]/800c^{2}k\right)<4\exp\left(-\lambda^{2k+1}\ln2/c^{2}\right)=4\cdot2^{-\lambda}.
\end{aligned}
\label{eq:from-talagrand}
\end{equation}
Putting~(\ref{eq:almost-surely-equal}),~(\ref{eq:same-expectancy})
and~(\ref{eq:from-talagrand}) together, 
\begin{align*}
\prob\left[\left|Y-\E\left[Y\right]\right|>\frac{1}{5}\E\left[Y\right]\right] & \le\prob\left[Y>\hat{Y}\right]+\prob\left[\left|\hat{Y}-\E\left[Y\right]\right|>\frac{1}{5}\E\left[Y\right]\right]\\
 & \le2^{-\lambda}+\prob\left[\left|\hat{Y}-\E\left[\hat{Y}\right]\right|>\frac{1}{5}\E\left[Y\right]-\frac{1}{32}\right]<5\cdot2^{-\lambda},
\end{align*}
under the condition $\frac{1}{5}\E\left[Y\right]-\frac{1}{32}\ge\frac{1}{10}\E\left[Y\right]+60\sqrt{\tau}$,
satisfied whenever $\lambda\ge650$.
\end{proof}

\section{Partially symmetric functions\label{sec:Partially-symmetric-functions}}

A key concept in the study of symmetric and partially symmetric functions
is the following notion:
\begin{defn*}[{\cite[Definition 3.1]{BWY12}}]  %% Definition 4 in FOCS
The symmetric influence of a set $T\subseteq\left[n\right]$ of variables
in a Boolean function $f:\mathbb{Z}_{2}^{n}\to\mathbb{Z}_{2}$ is
defined as
\[
{\rm SymInf}_{f}\left(T\right)=\prob_{x\in\mathbb{Z}_{2}^{n},\sigma\in S_{n}}\left[f\left(x\right)\neq f\left(\sigma\left(x\right)\right)\mid\forall i\not\in T:\sigma\left(i\right)=i\right].
\]

\end{defn*}
By definition, a $T$-symmetric function $f$ has ${\rm SymInf}_{f}\left(T\right)=0$;
conversely, for functions far from being $T$-symmetric we have the
following lemma:
\begin{lem}[{\cite[Lemma 3.3]{BWY12}\label{lem:syminf}}]  %% Lemma 3 in FOCS
If $f$ is $\epsilon$-far from being $T$-symmetric, then ${\rm SymInf}_{f}\left(T\right)\ge\epsilon$.
\qed
\end{lem}
In other words, distinguishing between a $T$-symmetric function and
one far from being $T$-symmetric can be done by estimating the symmetric
influence.

\medskip{}

The following proposition determines the number of queries needed
for passive and active testing of symmetric Boolean functions. Although
these results are a special case of partially symmetric functions,
we feel that this serves as an introduction and provides some intuition.
\begin{prop}
\label{prop:symmetric-passive-active-UB-and-LB} $Q^{p}(\mathrm{Sym}_{n})=\Theta(n^{1/4})$
and $Q^{a}(\mathrm{Sym}_{n})=O(1)$.\end{prop}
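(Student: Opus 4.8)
The plan is to prove the two claims separately; the lower bound for passive testing of symmetric functions is the interesting part, while the upper bound for active testing follows from a standard reduction to counting.

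For the active upper bound $Q^{a}(\mathrm{Sym}_{n})=O(1)$, I would observe that a symmetric function is determined by its values on the $n+1$ Hamming weights, so a tester essentially needs to check consistency: that $f(x)$ depends only on $|x|$. Given a polynomial-size random sample $U$, with high probability $U$ contains many pairs (indeed clusters) of points of equal Hamming weight near $n/2$, where the binomial mass is concentrated. Querying $O(1/\epsilon)$ such equal-weight pairs and rejecting on disagreement gives a tester: if $f$ is symmetric it always passes, and if $f$ is $\epsilon$-far from every symmetric function then, since the symmetric function $g$ agreeing with the plurality value of $f$ on each weight level is $\epsilon$-far, a constant fraction of the probability mass at weights near $n/2$ must be in the "minority" — so a random equal-weight pair catches a disagreement with constant probability. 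The one subtlety is arranging that a polynomial-size $U$ yields enough independent equal-weight pairs concentrated at the heavy levels; a birthday-type argument shows $u=O(\sqrt{n})$ already suffices, well within $\mathrm{poly}(n)$.

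For the passive lower bound $Q^{p}(\mathrm{Sym}_{n})=\Omega(n^{1/4})$, I would use the same indistinguishability strategy behind Lemma~\ref{lem:active-testing-dimension}: exhibit a distribution $\pi$ supported on symmetric functions such that $q=o(n^{1/4})$ random examples cannot be told apart from a uniform random Boolean function (which is $0.01$-far from symmetric whp). The natural choice is to pick the value on each weight level independently and uniformly — i.e., $f(x)=b_{|x|}$ with $b_0,\dots,b_n$ i.i.d.\ uniform bits. Then $q$ random examples $x^{(1)},\dots,x^{(q)}$ yield a truly uniform output vector unless two of them share a Hamming weight, in which case the outputs are forced equal. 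Since the weights $|x^{(j)}|$ are (essentially) independent with the binomial spread $\Theta(\sqrt{n})$, the probability that all $q$ weights are distinct is $1-o(1)$ precisely when $q=o(n^{1/4})$, by the birthday bound. Conditioned on distinct weights, the example-response vector is exactly uniform, so no passive tester using $q=o(n^{1/4})$ examples can distinguish $\pi$ from random, and the lower bound follows. The complementary passive upper bound $O(n^{1/4})$ (needed for $\Theta$) comes from Proposition~\ref{prop:passive-testing-upper-bound-by-learning} is too weak here, so instead I would argue directly: collect $u=\mathrm{poly}(n)$ examples, and with $q=\Theta(n^{1/4})$ of them one gets, by the birthday bound, an equal-weight pair near the heavy levels whose disagreement (as in the active case) is witnessed with constant probability when $f$ is far — while learning the relevant weight-levels well enough to not over-reject.

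The main obstacle is the passive upper bound matching $n^{1/4}$: one must check that $\Theta(n^{1/4})$ random examples simultaneously (a) contain an equal-weight colliding pair with constant probability, and (b) that such a collision lands at a weight level carrying enough binomial mass that a disagreement there actually certifies $\epsilon$-farness — and that a single such pair suffices rather than needing to repeat, which requires boosting the collision probability to a constant by taking $q=C\,n^{1/4}$ for a suitable constant $C$ rather than $o(n^{1/4})$. Handling the edge weight levels (where binomial mass is tiny, so disagreements there are cheap for an adversary) is what forces the argument to focus on the $\Theta(\sqrt{n})$-window around $n/2$, and making that localization quantitative is the technical heart.
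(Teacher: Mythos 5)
Your proposal is correct and takes essentially the same route as the paper: both upper bounds come from locating equal-Hamming-weight pairs in the sample (a birthday argument giving $\Theta(n^{1/4})$ passively and $O(1)$ queries actively from a $\mathrm{poly}(n)$-size sample) and checking consistency on those pairs, and the lower bound is the same collision argument, implicitly against a uniformly random symmetric function. If anything, you supply more detail than the paper does, which omits both the localization to the middle $\Theta(\sqrt{n})$ weight levels and the explicit indistinguishability step that you spell out.
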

\begin{proof}
A symmetric function is characterized by its layers of different Hamming
weight. For each Hamming weight between 0 and $n$, the function outputs
a consistent value. To test symmetry given a function, it suffices
to randomly choose an input $x\in\mathbb{Z}_{2}^{n}$ and a permutation
of it, and see if the output is consistent over the two inputs. Since
the Hamming weight of $x$ is distributed $\Bin\left(n,1/2\right)$,
two random inputs share the same Hamming weight with probability $4^{-n}\binom{2n}{n}=\left(1+o\left(1\right)\right)\sqrt{2/\pi n}$;
having fewer than $\frac{1}{2}n^{1/4}$ random samples yields even
a single such pair with probability at most $\frac{1}{2}\left(\frac{1}{2}n^{1/4}\right)^{2}\cdot\sqrt{2/\pi n}<1/8$.
%%On the other hand, among $\left(2\pi n\right)^{1/4}$ random samples
%%we expect $\frac{1}{2}\sqrt{2\pi n}\cdot\sqrt{2/\pi n}=1$ such pair.
%%By Markov, repeating this $12/\epsilon$ times results in at least
%%$2/\epsilon$ pairs with probability $5/6$.

On the other hand, among $4\left(2\pi n\right)^{1/4}$ random samples,
it is not hard to see that the probability of not having 
such a pair is smaller than, say, $1/7$.
%NA
(One way to show this fact is by looking for matches 
between the first and 
second halves of the samples,
assuming the first half did not yield such a pair already. In this case
with high probability the total measure of the layers in which 
we have a representative from the first half is at least
$\frac{1}{(2 \pi n)^{1/4}}$ and conditioning on this, the probability
that no sample from the second half falls into one of these layers
is smaller than $e^{-2}.$)

By Markov, repeating this $14/\epsilon$ times results in at least $12/\epsilon$ sets
without a desired pair with probability at most $1/6$. Therefore, with probability at least $5/6$,
we have at least $2/\epsilon$ pairs.
By Lemma \ref{lem:syminf},
if the function is $\epsilon$-far from being symmetric then each
such pair will have different outputs with probability at least $\epsilon$,
so we will fail to detect this with probability $\left(1-\epsilon\right)^{2/\epsilon}<1/e^{2}$.
Altogether the success probability exceeds $2/3$.

In the context of active testing, given a sample space of, say, $u=n/\epsilon$
vectors we can easily find $2/\epsilon$ input pairs with the same
Hamming weight each, thus testing whether the input function is indeed
symmetric can be done using $4/\epsilon$ queries.\end{proof}
\begin{rem*}
Consider the following slight modification of the algorithms above.
Instead of rejecting the input function upon the first example of
it not being symmetric, we estimate its symmetric influence by counting
the number of such examples among all pairs. This enables us to passively
(resp., actively) distinguish between a function that is $\epsilon/2$-close
to being symmetric and one that is $\epsilon$-far using $O\left(\epsilon^{-2}n^{1/4}\right)$
(resp., $O\left(\epsilon^{-2}\right)$) queries. Such an algorithm
is called a \emph{tolerant tester}.
\end{rem*}
Some families of Boolean functions, such as symmetric and partially
symmetric functions, have many pairs of functions which are close
to one another. In these cases, the upper bound of Fact~\ref{fact:simple-PAC-UB},
which relies only on the size of the family, is not tight. We remedy
this by proving the following refined version.
\begin{defn}
Let $\mathcal{P}$ be a family of Boolean functions and let $\epsilon>0$.
Denote by $\mathcal{I}_{\epsilon}\left(\mathcal{P}\right)$ a subfamily
of $\mathcal{P}$ of maximal size such that every two distinct $f,g\in\mathcal{I}_{\epsilon}\left(\mathcal{P}\right)$
are $\epsilon$-far.\end{defn}
\begin{prop}
\label{prop:refined-PAC-UB-LB}Let $\mathcal{P}$ be a family of Boolean
functions and let $\epsilon>0$. Then 
\[
\left\lfloor \log\left|\mathcal{I}_{2\epsilon}\left(\mathcal{P}\right)\right|\right\rfloor \leq Q_{\epsilon}^{\ell}\left(\mathcal{P}\right)\le\left\lceil \tfrac{64}{\epsilon}\ln\left|\mathcal{I}_{\epsilon/2}\left(\mathcal{P}\right)\right|\right\rceil .
\]
\end{prop}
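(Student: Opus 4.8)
My plan is to prove the two inequalities separately, since they rely on opposite directions of the "packing" idea.

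*Lower bound.* For the inequality $\lfloor \log|\mathcal{I}_{2\epsilon}(\mathcal{P})|\rfloor \le Q_\epsilon^\ell(\mathcal{P})$, I would use an information-theoretic / adversary argument. Suppose $\mathcal{A}$ is a proper $q$-query $\epsilon$-learning algorithm. Consider the uniform distribution over the functions in $\mathcal{I}_{2\epsilon}(\mathcal{P})$ as the unknown target. The key observation is that the output $g$ of $\mathcal{A}$, together with the closeness guarantee, must determine the target $f$: if $f,f'\in\mathcal{I}_{2\epsilon}(\mathcal{P})$ are both $\epsilon$-close to the same $g$, then by the triangle inequality $\mathrm{dist}(f,f')<2\epsilon 2^n$, contradicting $2\epsilon$-separation unless $f=f'$. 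Hence a successful run of $\mathcal{A}$ recovers $f$ with probability $\ge 2/3$. Now the transcript the algorithm sees is $q$ values in $\mathbb{Z}_2$, i.e. at most $2^q$ possible transcripts for each fixed query strategy; more carefully, the algorithm's (randomized) map from target to output, averaged over its coins and the random examples, can succeed with probability $\ge 2/3$ only if roughly $2^q \gtrsim |\mathcal{I}_{2\epsilon}(\mathcal{P})|$. I would make this precise via Fano's inequality or a direct counting argument: the mutual information between the target and the $q$-bit transcript is at most $q$, while recovering a uniform element of a set of size $M$ with probability $2/3$ forces information $\gtrsim \log M - H(1/3) - \ldots$; choosing constants so the clean bound $q \ge \lfloor\log M\rfloor$ comes out may require slightly sharpening the success probability bookkeeping, but this is standard. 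The cleanest route is probably: condition on the algorithm's random choices, note that the set of inputs queried and the transcript together take at most $2^q$ values (since the queries are random examples, not chosen, the transcript is literally a $q$-bit string once the sample points are fixed), so the posterior on a uniformly random $2\epsilon$-separated target is supported on $\le 2^q$ atoms per sample, forcing $2^q \cdot (\text{something}) \ge \frac{2}{3} M$.

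*Upper bound.* For $Q_\epsilon^\ell(\mathcal{P}) \le \lceil \tfrac{64}{\epsilon}\ln|\mathcal{I}_{\epsilon/2}(\mathcal{P})|\rceil$, the algorithm is the natural one: take a maximal $\epsilon/2$-separated subfamily $\mathcal{I}_{\epsilon/2}(\mathcal{P})$ (which by maximality is also an $\epsilon/2$-net for $\mathcal{P}$: every $f\in\mathcal{P}$ is $\epsilon/2$-close to some member), draw $q$ random examples of the target $f\in\mathcal{P}$, and output the $g\in\mathcal{I}_{\epsilon/2}(\mathcal{P})$ minimizing the observed empirical disagreement with $f$ on the sample (breaking ties arbitrarily, and returning a genuine member of $\mathcal{P}$ so the learner is proper — note $\mathcal{I}_{\epsilon/2}(\mathcal{P})\subseteq\mathcal{P}$). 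I would then argue correctness by a union bound over the net: let $g^\ast\in\mathcal{I}_{\epsilon/2}(\mathcal{P})$ be a fixed net point with $\mathrm{dist}(f,g^\ast)\le \epsilon/2$; any "bad" candidate $g$ with $\mathrm{dist}(f,g)>\epsilon$ has, on each random example, disagreement probability $>\epsilon$ with $f$, whereas $g^\ast$ has disagreement probability $\le \epsilon/2$. By a Chernoff bound, with $q = \Theta(\tfrac1\epsilon\ln|\mathcal{I}_{\epsilon/2}(\mathcal{P})|)$ examples, the empirical disagreement of every bad $g$ exceeds, say, $\tfrac34\epsilon q$ while that of $g^\ast$ stays below $\tfrac34\epsilon q$, except with probability $<1/3$ after the union bound over the (at most $|\mathcal{I}_{\epsilon/2}(\mathcal{P})|$) candidates. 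Hence the empirical minimizer is $\epsilon$-close to $f$. Tracking the constants (two-sided Chernoff for $g^\ast$ and the bad $g$'s, union bound absorbing the $\ln$) gives the stated $64/\epsilon$ factor with room to spare.

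*Main obstacle.* The upper bound is routine empirical-risk-minimization plus Chernoff; the only care needed is ensuring the learner is \emph{proper}, which is automatic since $\mathcal{I}_{\epsilon/2}(\mathcal{P})\subseteq\mathcal{P}$. The delicate part is the lower bound: getting the clean bound $\lfloor\log|\mathcal{I}_{2\epsilon}(\mathcal{P})|\rfloor$ (rather than $\log M - O(1)$) with success probability exactly $2/3$. The trick is that we do not need the algorithm to identify the target from \emph{information} alone in the Fano sense; rather, for a fixed realization of the $q$ random sample points and the algorithm's coins, the transcript is one of at most $2^q$ strings, and the algorithm's output is a deterministic function of that transcript, so the output takes at most $2^q$ values; since (by the separation + triangle inequality argument) distinct targets in $\mathcal{I}_{2\epsilon}(\mathcal{P})$ that are both learned correctly must yield distinct outputs, and the algorithm is correct on a $\ge 2/3$ fraction of a uniformly random target, we get $2^q \ge \tfrac23|\mathcal{I}_{2\epsilon}(\mathcal{P})|$ on average over sample points and coins — hence for some realization — which already gives $q\ge \log(\tfrac23 M) > \log M - 1$, and since $q$ is an integer, $q\ge\lfloor\log M\rfloor$. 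This combinatorial packing argument sidesteps the entropy constants entirely.
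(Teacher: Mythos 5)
Your proposal is correct and follows essentially the same route as the paper: the lower bound is the same packing argument (a proper learner must output distinct functions for distinct members of the $2\epsilon$-separated family, and a fixed realization of the random sample and coins admits at most $2^{q}$ transcripts, hence at most $2^{q}$ outputs), and the upper bound is the same empirical-disagreement-minimization over the maximal $\epsilon/2$-separated subfamily, analyzed via Chernoff bounds with threshold $3\epsilon q/4$ and a union bound. The Fano detour in your lower-bound discussion is unnecessary, as you yourself conclude; the counting version you settle on is exactly what the paper does.
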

\begin{proof}
A proper learning algorithm for $\mathcal{P}$ is required to return
a function from $\mathcal{P}$ that is $\epsilon$-close to the input
function. Since functions in $\mathcal{I}_{2\epsilon}\left(\mathcal{P}\right)$
are $2\epsilon$-far from one another, the algorithm has to return
a different output for each of them. Any deterministic algorithm making
$q$ queries can only have $2^{q}$ different outputs, so if it performs
less than $\left\lfloor \log\left|\mathcal{I}_{2\epsilon}\left(\mathcal{P}\right)\right|\right\rfloor $
queries, it must be wrong with probability at least $1/2$. A randomized
algorithm for this problem can be viewed as a distribution over deterministic
algorithms (as the queries are chosen randomly and the algorithm is
non-adaptive), and therefore cannot improve the success probability
beyond $1/2$.

Next, consider the following learning algorithm: given an input function
$f\in\mathcal{P}$, return the function $g\in\mathcal{I}_{\epsilon/2}\left(\mathcal{P}\right)$
that agrees with $f$ on as many queries as possible out of $q=\left\lceil \mbox{\ensuremath{\left(64/\epsilon\right)}}\ln\left|\mathcal{I}_{\epsilon/2}\left(\mathcal{P}\right)\right|\right\rceil $
random queries. By definition, $f$ is $\epsilon/2$-close to some
$f'\in\mathcal{I}_{\epsilon/2}\left(\mathcal{P}\right)$; therefore,
$f$ and $f'$ disagree on each query with probability at most $\epsilon/2$,
independently. The total number of disagreements is thus dominated
by a $\Bin\left(q,\epsilon/2\right)$ random variable and hence with
high probability they disagree on fewer than $3\epsilon q/4$ queries.
Using a similar argument, a function $h\in\mathcal{I}_{\epsilon/2}\left(\mathcal{P}\right)$
that is $\epsilon$-far from $f$ will disagree with $f$ on more
than $3\epsilon q/4$ queries with probability at least $1-\exp\left(-\epsilon q/32\right)=1-\left|\mathcal{I}_{\epsilon/2}\left(\mathcal{P}\right)\right|^{-2}$.
By the union bound, with high probability no such $h$ will outperform
$f'$ and thus the algorithm will return a function that is $\epsilon$-close
to $f$ (the obvious candidate being $f'$).\end{proof}
\begin{cor}
\label{cor:psf-proper-learning-UB-and-LB} $Q^{\ell}(\mathrm{Sym}_{n-k})=\Theta(2^{k}\sqrt{n-k})$
for $k<n$; in particular, $Q^{\ell}(\mathrm{Sym}_{n})=\Theta(\sqrt{n})$.\end{cor}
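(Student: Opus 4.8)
The plan is to apply Proposition~\ref{prop:refined-PAC-UB-LB}, which reduces computing $Q^{\ell}(\mathrm{Sym}_{n-k})$ to estimating, up to a constant factor in the exponent, the size of a maximal $\delta$-separated subfamily $\mathcal{I}_{\delta}(\mathrm{Sym}_{n-k})$ (with $\delta=2\epsilon$ for the lower bound on $Q^{\ell}$ and $\delta=\epsilon/2$ for the upper bound). I will show that
\[
\log\left|\mathcal{I}_{\delta}(\mathrm{Sym}_{n-k})\right|=\Theta\!\left(2^{k}\sqrt{n-k}\right)
\]
for every fixed $\delta>0$; since Proposition~\ref{prop:refined-PAC-UB-LB} sandwiches $Q^{\ell}(\mathrm{Sym}_{n-k})$ between $\lfloor\log|\mathcal{I}_{2\epsilon}(\mathrm{Sym}_{n-k})|\rfloor$ and $\lceil(64/\epsilon)\ln|\mathcal{I}_{\epsilon/2}(\mathrm{Sym}_{n-k})|\rceil$, this gives $Q^{\ell}(\mathrm{Sym}_{n-k})=\Theta(2^{k}\sqrt{n-k})$, and the case $k=O(1)$ (in particular $k=0$) gives $\Theta(\sqrt{n})$.

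\medskip\noindent\emph{Setup.} Write $m=n-k$. After fixing a set $T$ of size $m$ on which $f\in\mathrm{Sym}_{n-k}$ is symmetric and letting $K=[n]\setminus T$, the function $f$ is encoded by a table $g_{f}\colon\mathbb{Z}_{2}^{K}\times\{0,\dots,m\}\to\mathbb{Z}_{2}$, where $g_{f}(a,w)$ is the common value of $f$ on inputs whose $K$-part is $a$ and whose $T$-part has Hamming weight $w$. For two such functions with the same $K$,
\[
\mathrm{dist}(f_{1},f_{2})/2^{n}=2^{-k}\sum_{a\in\mathbb{Z}_{2}^{K}}\mu\!\left(\left\{w:g_{f_{1}}(a,w)\neq g_{f_{2}}(a,w)\right\}\right),\qquad\mu(S):=2^{-m}\sum_{w\in S}\binom{m}{w}.
\]
The key observation is that $\mu$ is the $\Bin(m,1/2)$ distribution, so all but a negligible fraction of its mass sits on the central window $W^{*}=\{w:|w-m/2|\le C\sqrt{m}\}$, which has only $|W^{*}|=\Theta(\sqrt{m})$ weights, each of mass $\mu(\{w\})=\Theta(1/\sqrt{m})$. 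Thus the ``effective dimension'' of $\mathrm{Sym}_{n-k}$ is $\Theta(2^{k}\sqrt{m})$ rather than $\Theta(2^{k}m)$; recognizing and exploiting this is the main (conceptual) obstacle, after which both bounds are routine.

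\medskip\noindent\emph{Upper bound on the packing.} Choose $C=C(\epsilon)=O(1)$ so that $\mu(\{0,\dots,m\}\setminus W^{*})<\epsilon/4$, possible by a Chernoff estimate. If $f_{1},f_{2}\in\mathcal{I}_{\epsilon/2}(\mathrm{Sym}_{n-k})$ share the same $K$ and satisfy $g_{f_{1}}=g_{f_{2}}$ on $\mathbb{Z}_{2}^{K}\times W^{*}$, then the displayed formula yields $\mathrm{dist}(f_{1},f_{2})/2^{n}<\epsilon/4<\epsilon/2$, a contradiction; hence for each of the $\binom{n}{k}$ choices of $K$ the map $f\mapsto g_{f}|_{\mathbb{Z}_{2}^{K}\times W^{*}}$ is injective on the members of $\mathcal{I}_{\epsilon/2}$ having that $K$. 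Therefore $|\mathcal{I}_{\epsilon/2}(\mathrm{Sym}_{n-k})|\le\binom{n}{k}2^{2^{k}|W^{*}|}$, so $\ln|\mathcal{I}_{\epsilon/2}(\mathrm{Sym}_{n-k})|=O(k\log n+2^{k}\sqrt{m})=O(2^{k}\sqrt{n-k})$ in the regime $k=O(\log n/\log\log n)$, and Proposition~\ref{prop:refined-PAC-UB-LB} gives the matching upper bound on $Q^{\ell}$.

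\medskip\noindent\emph{Lower bound on the packing.} Now take $C$ to be a small absolute constant, so that $\mu(\{w\})\ge c/\sqrt{m}$ for all $w\in W^{*}$ with $c>0$ absolute. Fix $K=[k]$ and restrict to tables $g$ supported on $\mathbb{Z}_{2}^{K}\times W^{*}$; such a table is simply a point of $\{0,1\}^{L}$ with $L=2^{k}|W^{*}|=\Theta(2^{k}\sqrt{m})$, and by the displayed formula two of them produce $2\epsilon$-far functions as soon as their Hamming distance exceeds $(2\epsilon/c)\,2^{k}\sqrt{m}$, a fixed fraction $\Theta(\epsilon)<1/2$ of $L$. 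A standard Gilbert--Varshamov packing of $\{0,1\}^{L}$ with minimum distance $\Theta(L)$ has size $2^{\Omega(L)}$, giving $2^{\Omega(2^{k}\sqrt{m})}$ pairwise $2\epsilon$-far functions in $\mathrm{Sym}_{n-k}$, i.e.\ $\log|\mathcal{I}_{2\epsilon}(\mathrm{Sym}_{n-k})|=\Omega(2^{k}\sqrt{n-k})$. Combining the two estimates with Proposition~\ref{prop:refined-PAC-UB-LB} completes the proof.
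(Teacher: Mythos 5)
Your proof is correct and follows essentially the same route as the paper: both arguments reduce the problem via Proposition~\ref{prop:refined-PAC-UB-LB} to estimating $\log|\mathcal{I}_{\delta}(\mathrm{Sym}_{n-k})|$, and both obtain $\Theta(2^{k}\sqrt{n-k})$ by restricting attention to the $\Theta(\sqrt{n-k})$ central Hamming-weight layers, where each layer carries mass $\Theta(1/\sqrt{n-k})$. The only cosmetic difference is that you exhibit the large packing via a Gilbert--Varshamov argument while the paper uses random functions; your write-up is simply a more detailed version of the paper's sketch.
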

\begin{proof}
First, we show that $\left|\mathcal{I}_{\epsilon/2}(\mathrm{Sym}_{n-k})\right|=2^{O\left(2^{k}\sqrt{n-k}\right)}$.
The binomial distribution $\Bin\left(n-k,1/2\right)$ is concentrated
around its center, and in particular the middle $\ell=1+2\left\lceil \sqrt{\left(n-k\right)\ln\left(4/\epsilon\right)/2}\right\rceil $
layers account for at least $1-\epsilon/2$ of the weight. In other
words, every $\left(n-k\right)$-symmetric function is $\left(\epsilon/2\right)$-close
to an $\ell$-canonical $\left(n-k\right)$-symmetric function, which
is zero outside the middle $\ell$ layers. We can thus bound $\left|\mathcal{I}_{\epsilon/2}(\mathrm{Sym}_{n-k})\right|$
from above by $2^{2^{k}\ell}$, the number of $\ell$-canonical functions.

For the lower bound, consider the middle $\ell'=1+2\left\lfloor \sqrt{n-k}\right\rfloor $
layers. The weight ratio between any pair of these layers is bounded
by $\binom{n-k}{\left\lfloor \left(n-k\right)/2\right\rfloor }/\binom{n-k}{\left\lfloor \left(n-k\right)/2-\sqrt{n-k}\right\rfloor }<e^{2}$.
Let $\mathcal{C}\subset\mathbb{Z}_{2}^{2^{k}\ell'}$ be an error correcting
code of rate $1/2$ and relative distance $1/10$; in other words,
$\mathcal{C}$ has at least $2^{2^{k-1}\ell'}$ codewords, every pair
of which are $\left(1/10\right)$-far. We can interpret each codeword
as an $\ell'$-canonical $\left(n-k\right)$-symmetric function, which
is $\left(1/10e^{2}\right)$-far from the rest. Hence we get $\left|\mathcal{I}_{2\epsilon}(\mathrm{Sym}_{n-k})\right|\ge2^{2^{k-1}\ell'}$
as long as $\epsilon<1/20e^{2}$.

Therefore, for our fixed $\epsilon$, the result follows from Proposition~\ref{prop:refined-PAC-UB-LB}.
\end{proof}
Proposition~\ref{prop:psf-active-passive-UB} provides an upper
bound for the query complexity of passive and active testing of partially
symmetric functions. Its proof relies on the following simple concentration
claim in which we make no attempt to optimize the estimates.
\begin{claim}
\label{clm:collisions-concentration}
There is an absolute constant $b>0$ such that
for every $c, 0<c<1$ the following
holds.
Let $s$ and $t$ be integers
satisfying $s<t$. Let $P$ be an arbitrary probability distribution
on $t$ bins, where the probability of each bin is at least $c/t$. 
Then, when we throw 
$s$ balls randomly and independently into $t$ bins
according to the probability $P$, the probability of getting
less than $c s^{2}/9t$ collisions%
\footnote{A single collision happens every 
time we place a ball in an already
occupied bin.%
} is at most $\exp\left(-bcs^{2}/t\right)$.
\end{claim}
\begin{proof}
If the number of occupied bins is 
less than $s/3$ after $\left\lceil s/2\right\rceil $
balls were thrown, then we already have at least $s/6>cs^{2}/9t$ 
collisions.
Otherwise, each of the next $\left\lfloor s/2\right\rfloor $ balls
has a probability of at least $cs/3t$ to collide with these occupied
bins, independently. The number of collisions created by the last
$\left\lfloor s/2\right\rfloor $ balls thus dominates a binomial
$\Bin\left(\left\lfloor s/2\right\rfloor ,cs/3t\right)$ random variable.
By Chernoff, it is less than $cs^{2}/9t$ with probability
at most $\exp\left(-bc s^{2}/t\right)$.\end{proof}
\begin{prop}
\label{prop:psf-active-passive-UB} 
$Q^{p}(\mathrm{Sym}_{n-k})=O\left(n^{1/4}2^{k/2}\sqrt{k\log n}\right)$
and $Q^{a}(\mathrm{Sym}_{n-k})=O\left(2^{k}k\log n\right)$, for $k=o\left(\log n\right)$.\end{prop}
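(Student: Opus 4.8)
The plan is to prove the two bounds separately, both by reduction to testing pure symmetry (Proposition~\ref{prop:symmetric-passive-active-UB-and-LB}) combined with a preprocessing phase that locates the symmetric part. Recall that a function in $\mathrm{Sym}_{n-k}$ is symmetric on some unknown co-$k$-subset $T$; equivalently, after fixing the values of the (at most) $k$ variables outside $T$, the restricted function is fully symmetric. The idea is: first identify (a candidate for) the $k$ asymmetric coordinates, then run the symmetric tester on the remaining coordinates, for each of the $2^k$ possible settings of the asymmetric ones.

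For the passive bound $Q^{p}(\mathrm{Sym}_{n-k})=O(n^{1/4}2^{k/2}\sqrt{k\log n})$, I would argue as follows. Take $q=O(n^{1/4}2^{k/2}\sqrt{k\log n})$ random examples. The key combinatorial fact is that among $q$ random points, with good probability there are many pairs that agree on their first $k$ coordinates and have equal Hamming weight on the last $n-k$ coordinates: the probability two fixed random points collide this way is $\Theta(2^{-k}/\sqrt{n})$, so one expects $\Theta(q^{2}2^{-k}/\sqrt n)$ such pairs, which is $\Omega(k\log n)$ for our choice of $q$. For each hypothesized identity of the asymmetric set (here we do not know $T$, but we can instead reason about the $\binom{n}{k}$ candidate sets — or, more cleverly, treat the problem as: the function must be \emph{constant across every such colliding pair}, because any two points agreeing outside $T$ and having equal weight inside $T$ are in the same orbit). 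Thus the tester simply checks every colliding pair among the $q$ samples and accepts iff the function is consistent on all of them. Soundness follows because a function $\epsilon$-far from $\mathrm{Sym}_{n-k}$ must, for \emph{every} co-$k$-set $T$, be inconsistent on a noticeable fraction of $T$-orbits, and restricting to equal-weight pairs outside a fixed $k$-set still catches a $\Theta(2^{-k}/\sqrt n)$ fraction of violations — so $O(k\log n)$ colliding pairs suffice to detect it with the union bound over all $\binom{n}{k}$ (really $2^{\Theta(k\log n)}$) candidate sets.

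For the active bound $Q^{a}(\mathrm{Sym}_{n-k})=O(2^{k}k\log n)$, I would use the freedom to choose queries from a polynomial-size sample $U$. First, within $U$ we can find, for any prescribed $k$-subset $J\subseteq[n]$ and any target partial assignment on $J$, plenty of points extending it, and also plenty of equal-weight pairs outside $J$ — this is the ``active'' analogue of the collision argument and costs no queries, only samples. The algorithm: for each of the $\binom{n}{k}$ candidate asymmetric sets $J$ (we cannot afford to enumerate all of them directly — this is the obstacle, see below), and each of the $2^{k}$ assignments to $J$, run the constant-query symmetric subroutine on an orbit-pair from $U$. Actually the cleaner route, matching the $2^{k}k\log n$ bound, is: we do not enumerate $\binom{n}{k}$ sets but instead use $O(k\log n)$ queries to \emph{learn} which coordinates are influential (each influential coordinate is detected by an edge query along that coordinate, and $O(k\log n)$ random-but-structured pairs from $U$ suffice by a coupon-collector argument to hit all $k$ of them), then spend $2^{k}\cdot O(1)$ further queries verifying full symmetry on the complement for each setting of the $k$ found coordinates.

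The main obstacle in both parts is handling the fact that the asymmetric set $T$ is unknown and there are $\binom{n}{k}$ candidates: a naive union bound over candidate sets costs a $\log\binom{n}{k}=\Theta(k\log n)$ factor, which is exactly why that factor appears in both bounds, but one must check that the detection probability per colliding pair is genuinely $\Theta(2^{-k}/\sqrt n)$ \emph{uniformly} over far functions and candidate sets, and that equal-weight conditioning on $n-k$ coordinates (rather than $n$) does not change the $\Theta(1/\sqrt n)$ collision rate — both are routine binomial-concentration estimates but need to be stated carefully. A secondary subtlety is that in the active model the sample $U$ of size $\mathrm{poly}(n)$ must be shown to contain the needed structured pairs (same prefix on $J$, equal suffix weight) with high probability for \emph{all} relevant $J$ simultaneously, which again is a union bound absorbed into the $k\log n$ factor.
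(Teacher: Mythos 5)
Your passive argument is essentially the paper's: take $q=\Theta\bigl(n^{1/4}2^{k/2}\sqrt{k\log n}\bigr)$ random samples, observe that for each candidate asymmetric $k$-set $A$ the sample contains $\Theta(q^{2}2^{-k}/\sqrt{n})=\Theta(k\log n)$ pairs agreeing on $A$ and on Hamming weight, each such pair catches a violation with probability $\Omega(\epsilon)$, and a union bound over the $\binom{n}{k}$ candidates finishes (the paper, like you, omits the concentration details for the number of colliding pairs). One phrasing caveat: the tester must accept iff the answers are consistent with \emph{some} candidate set $A$, not iff $f$ is consistent on all colliding pairs over all candidates simultaneously --- the latter would reject genuine $(n-k)$-symmetric functions.

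The active bound is where you have a genuine gap. The paper's argument is a one-line corollary of the passive analysis: the passive tester only ever uses the answers on the $\Theta(q^{2}/\sqrt{n})=O(2^{k}k\log n/\epsilon)$ sample points that participate in equal-Hamming-weight collisions, so an active tester with a $\mathrm{poly}(n)$-size sample simply selects those points and queries nothing else. Your ``cleaner route'' instead proposes to detect the $k$ influential coordinates via edge queries (pairs differing in a single coordinate) drawn from $U$. This fails in the active model: the probability that two uniformly random points of $\mathbb{Z}_{2}^{n}$ differ in exactly one coordinate is $n2^{-n}$, so a polynomial-size random sample contains no such pair, and more generally no pair at small Hamming distance; there is no coupon-collector argument to run. (The bound it would give, $O(k\log n+2^{k})$, is also strictly better than the claimed $O(2^{k}k\log n)$, which should itself have been a warning sign.) Your first active suggestion --- enumerate candidate sets and run a subroutine per set --- is also not the right accounting, since spending queries per candidate costs $\binom{n}{k}$ factors; the resolution is to reuse one global set of $O(2^{k}k\log n)$ queries across all candidates, exactly as in the passive soundness argument.
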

\begin{proof}
We begin with a passive testing algorithm. Let $f$ be the tested
Boolean function. Our algorithm asks $q=d(\epsilon)
n^{1/4}2^{k/2}\sqrt{k\log n}$
queries, and if the results obtained are consistent with $f$ being
$(n-k)$-symmetric it accepts, otherwise it rejects. It remains to
show that if $f$ is $\epsilon$-far from being $(n-k)$-symmetric
the algorithm rejects with high probability. Assume this is the case
and fix a $k$-set $T\in\binom{\left[n\right]}{k}$ of variables.
If we choose a random vector $x$ and another random vector $y$ obtained
from $x$ by permuting the elements in $[n]\setminus T$ the probability
that $f(x)\neq f(y)$ is at least $\epsilon$ by Lemma~\ref{lem:syminf}.
By Claim~\ref{clm:collisions-concentration} (where each bin
corresponds to the ordered pair consisting of the 
projection on $T$ and the Hamming weight of a typical vector, which
is within distance $\Theta(\sqrt n)$ from $n/2$),
for an appropriately chosen $d(\epsilon)$, with probability
at least $1-n^{-k}$ our queries will contain more 
than $0.5\frac{d(\epsilon)^{2}}{9}k\log n>k\log n/\epsilon$
random disjoint pairs $x,y$ which have the same Hamming weight and agree on
$T$. The probability that none of these pairs will satisfy $f(x)\neq f(y)$
is at most $(1-\epsilon)^{k\log n/\epsilon}<n^{-k}$. The union bound
thus completes the argument.

The same argument implies that the query complexity of active testing
is $O\left(2^{k}k\log n\right)$ because the only queries  the
passive algorithm above actually used are the results for the 
$\Theta(q^{2}/\sqrt{n})=\Theta\left(2^{k}k\log n\right)$
pairs $x,y$ which agree on their Hamming weight. The active
algorithm will thus simply select from the sample $\Theta\left(2^{k}k\log
n\right)$ disjoint pairs with the same Hamming weight  and proceed
as the passive algorithm.
\end{proof}
The following proposition provides a lower bound for the query complexity
of passive testing of partially symmetric functions. Note that it
matches the upper bound, up to a constant factor, when $k$ is constant.
\begin{prop}
\label{prop:psf-passive-LB} $Q^{p}(\mathrm{Sym}_{n-k})=\Omega\left(n^{1/4}\left(2^{k/2}+\sqrt{k\log n}\right)\right)$.\end{prop}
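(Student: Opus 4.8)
The plan is to prove the two terms of the lower bound separately, since the maximum of two lower bounds is again a lower bound. The $\Omega(n^{1/4}\sqrt{k\log n})$ term should follow from an embedding of the $k$-linear (or $k$-junta) lower bound into the partially symmetric setting, combined with the observation that in the passive model only the $\Theta(q^2/\sqrt n)$ pairs of samples sharing a Hamming weight carry any ``symmetry-relevant'' information. More precisely, I would take the distribution $\pi$ over $\mathrm{Lin}_k$ from the proof of Theorem~\ref{thm:k-linear-active-LB}, note that every $k$-linear function is $(n-k)$-symmetric while a random function is far from $\mathrm{Sym}_{n-k}$, and then argue that $q$ random samples reveal at most $O(q^2/\sqrt n)$ ``effective'' queries to the junta-part of the function (the coordinates outside the influential $k$-set are symmetrized away). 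Feeding $q' = O(q^2/\sqrt n)$ into the bound $Q^a(\mathrm{Lin}_k, u) = \Omega(k\log n)$ forces $q^2/\sqrt n = \Omega(k\log n)$, i.e. $q = \Omega(n^{1/4}\sqrt{k\log n})$. The technical care here is to make the reduction go through in the passive model, where the sample set itself is random rather than adversarial; one uses that among $q$ uniform random points, with high probability no Hamming weight is shared by more than a constant (or $O(\log n)$) number of them, so the count $q^2/\sqrt n$ of same-weight pairs is essentially the only resource.

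The $\Omega(n^{1/4}2^{k/2})$ term is a ``collision''-type argument in the spirit of the $\Omega(2^{k/2})$ lower bound for juntas in Proposition~\ref{prop:juntas-passive-UB-and-LB}, but now crossed with the $\Theta(n^{1/4})$ symmetry bottleneck. I would set up two distributions: in one, $f$ is drawn so that it genuinely depends on (the non-symmetric part indexed by) a random $k$-set $A$ in a way that makes it $\epsilon$-far from $\mathrm{Sym}_{n-k}$; in the other, $f$ is $(n-k)$-symmetric. To tell these apart a passive tester needs two sample points $x,y$ that (i) have the same Hamming weight and (ii) agree on the coordinates of $A$ — only such a pair can witness a violation of $A$-symmetry. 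Condition (i) holds for a random pair with probability $\Theta(1/\sqrt n)$ and condition (ii), given (i), holds with probability $\Theta(2^{-k})$ (for a random $A$, after conditioning on equal weight the chance the two strings agree on a fixed $k$-set is roughly $2^{-k}$ up to lower-order terms). So a useful pair appears with probability $\Theta(2^{-k}/\sqrt n)$ per pair, and among $q$ samples we see $\binom q2 \cdot \Theta(2^{-k}/\sqrt n)$ such pairs in expectation; for this to be $\Omega(1)$ we need $q = \Omega(n^{1/4}2^{k/2})$, and below that threshold the views of the tester under the two distributions are statistically indistinguishable.

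The order of steps: first fix $\epsilon$ and state the two hard distributions; second, establish the information-theoretic principle that a passive tester's transcript is (essentially) a function only of the same-weight pairs among its samples, since any single sample point, and any pair of distinct-weight points, is distributed identically under both ensembles; third, carry out the collision count to get the $2^{k/2}$ factor; fourth, do the reduction from $\mathrm{Lin}_k$ to get the $\sqrt{k\log n}$ factor, reusing Lemma~\ref{lem:active-testing-dimension} / Theorem~\ref{thm:k-linear-active-LB} after translating $q$ random samples into $O(q^2/\sqrt n)$ effective queries; finally, combine. I expect the main obstacle to be step two done rigorously — precisely quantifying how much a passive tester can learn from ``near-collisions'' and handling the small but nonzero probability of triples or larger clusters of equal-weight samples (which an adversary-free analysis must absorb into the error terms), and making sure the two factors really do both apply simultaneously so that their maximum, not just each individually, is justified. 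The $k$-dependence (the proposition is stated with $k = o(\log n)$ implicitly via Table~\ref{tab:main-result}) also needs to be tracked so that the $2^{-k}$ conditioning estimate in step three remains $\Theta(2^{-k})$ rather than degrading.
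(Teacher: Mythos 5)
Your first term, $\Omega(n^{1/4}2^{k/2})$, matches the paper's argument: both reduce to counting pairs of samples that share a Hamming weight and agree on the asymmetric $k$-set, each such pair occurring with probability $\Theta(2^{-k}/\sqrt{n})$, so that $o(n^{1/4}2^{k/2})$ samples yield no usable pair with high probability. (The paper fixes the asymmetric set to be $[k]$ and phrases this as a promise problem; your randomization over $A$ is an inessential variant.)

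For the second term there is a genuine gap. You take the yes-distribution to be the bare uniform distribution on $\mathrm{Lin}_k$ and then want to establish that a passive tester's transcript is essentially a function only of the same-weight pairs, on the grounds that single samples and distinct-weight pairs are distributed identically under both ensembles. That principle is false for $\mathrm{Lin}_k$ versus a random function: a random $k$-linear function is determined information-theoretically by roughly $\log\binom{n}{k}$ random labeled samples regardless of their Hamming weights, and indeed $Q^{p}(\mathrm{Lin}_k)=O(k\log n)$ by Theorem~\ref{thm:k-linear-active}, so no argument can show that only the $\Theta(q^2/\sqrt n)$ equal-weight pairs carry information about a bare $k$-linear function. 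The missing idea is a masking construction: the paper's yes-distribution is $f+g$, where $f$ is a uniformly random $k$-linear function and $g$ is a uniformly random fully symmetric function. The sum $f+g$ is still $(n-k)$-symmetric, and now any collection of samples with pairwise distinct Hamming weights receives i.i.d.\ uniform labels under both ensembles, because $g$ contributes a fresh uniform bit per weight. Letting $H$ be the set of Hamming weights attained by at least two samples, one reveals $g$ on $H$, observes by a balls-and-bins count that only $o(k\log n)$ of the $q=o\left(n^{1/4}\sqrt{k\log n}\right)$ samples have weight in $H$, and then invokes the active-testing lower bound of Theorem~\ref{thm:k-linear-active-LB} on those remaining samples. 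With that one modification your outline becomes the paper's proof.
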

\begin{proof}
As in the proof of Proposition~\ref{prop:juntas-passive-UB-and-LB},
we use a combination of two lower bounds. The first one, $\Omega(n^{1/4}2^{k/2})$,
is required even when the identity of the $k$ asymmetric variables
is known in advance. Assuming we are given the promise that the input
function is either $(n-k)$-symmetric and the asymmetric variables
are the first $k$ variables, or it is far from being partially symmetric,
one still needs to verify the partial symmetry. The only way to verify
it is by having pairs of inputs that share Hamming weight and agree
on the values of the first $k$ variables. However, we expect to see
no such pairs if the number of queries is $o\left(n^{1/4}2^{k/2}\right)$.

The second part of the lower bound uses the $\Omega(k\log n)$ bound
of Theorem~\ref{thm:k-linear-active-LB}. We wish to show that distinguishing
the sum of a random $k$-linear function and a random symmetric function
cannot be distinguished from a random function, given $q=o\left(n^{1/4}\sqrt{k\log n}\right)$
queries. Indeed, assume this many queries were performed and denote
by $H\subseteq\left\{ 0,1,\ldots,n\right\} $ the set of Hamming weights
attained by at least two queries. A balls and bins argument shows
that we expect only $o\left(k\log n\right)$ queries whose Hamming
weight lies in $H$. Due to the random symmetric function, the algorithm
cannot extract any information from queries that have a unique Hamming
weight. Say that we reveal to the algorithm the value of the random
symmetric function on $H$. Now, the algorithm has $o(k\log n)$ queries
and it must distinguish between a $k$-linear function and a random
function. Even if the algorithm were allowed to choose which queries
to pick out of the initial set of $q$ queries, the lower bound for
active testing of $k$-linear functions indicates this cannot be done.
\end{proof}
Theorem~\ref{thm:psf-separation} follows from Propositions~\ref{prop:psf-active-passive-UB}
and~\ref{prop:psf-passive-LB} and Corollaries~\ref{cor:juntas-active-LB}
and~\ref{cor:psf-proper-learning-UB-and-LB}, as well as the results
of~\cite{BWY12}.

\section{Low degree polynomials\label{sec:Low-degree-polynomials}}

We prove Theorem~\ref{thm:passive-polynoms-UB-and-LB} for a more
general case, allowing $1\leq d\leq n^{1/3}$. Let $\binom{n}{\le d}=\sum_{i=0}^{d}{n \choose i}$
be the number of monomials of degree at most $d$. Note that for constant
$d$, we have $\binom{n}{\le d}=\Theta(n^{d})$.
\begin{thm*}[Restatement of Theorem~\ref{thm:passive-polynoms-UB-and-LB}]
$Q^{p}(\mathrm{Pol}_{d})=\Theta(\binom{n}{\le d})$.
\end{thm*}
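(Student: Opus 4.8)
The plan is to prove the two matching bounds $Q^{p}(\mathrm{Pol}_d)=O(n_d)$ and $Q^{p}(\mathrm{Pol}_d)=\Omega(n_d)$ separately; the lower bound is the substantial direction. A polynomial of degree at most $d$ over $\mathbb{Z}_2^n$ is determined by one bit per monomial, so $|\mathrm{Pol}_d|=2^{n_d}$; Fact~\ref{fact:simple-PAC-UB} gives $Q^{\ell}_{\epsilon/2}(\mathrm{Pol}_d)=O(n_d)$ and then Proposition~\ref{prop:passive-testing-upper-bound-by-learning} yields $Q^{p}(\mathrm{Pol}_d)=O(n_d)$ for constant~$\epsilon$ (equivalently, $O(n_d)$ uniform examples pin down the coefficient vector by solving a linear system over $\mathbb{Z}_2$).

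For the lower bound I would run the indistinguishability argument behind Lemma~\ref{lem:active-testing-dimension}. Let $\pi$ be uniform over $\mathrm{Pol}_d$ (each of the $n_d$ coefficients an independent fair bit). For a query set $S=\{x_1,\ldots,x_q\}$, let $M=M(S)$ be the $q\times n_d$ evaluation matrix with entries $M_{j,T}=\prod_{i\in T}(x_j)_i$ over monomials $T$ with $|T|\le d$; then $\pi_S(y)=2^{-n_d}|\{c\in\mathbb{Z}_2^{n_d}:Mc=y\}|$, which equals $2^{-q}$ for every $y$ precisely when $\operatorname{rank}M=q$, and otherwise exceeds $\tfrac65 2^{-q}$ for some $y$. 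Since $n_d=2^{o(2^n)}$ when $d\le n^{1/3}$, a union bound over the $|\mathrm{Pol}_d|$ Hamming balls of radius $\epsilon2^n$ shows that a uniform random Boolean function is $\epsilon$-far from $\mathrm{Pol}_d$ with probability $1-o(1)$, so $\mathrm{Pol}_d$ is nontrivial. It thus suffices to prove the rank claim: for $q=\lfloor n_d/2\rfloor$ uniformly random points $x_1,\ldots,x_q$, one has $\operatorname{rank}M=q$ with probability at least $\tfrac56$. Granting it, a passive tester with any budget of at most $q$ queries sees, on a $\pi$-random input, the same distribution of (samples, answers) as on a uniform random function up to total variation $o(1)$, and hence cannot both accept $\mathrm{Pol}_d$ and reject far functions; therefore $Q^{p}(\mathrm{Pol}_d)>\lfloor n_d/2\rfloor=\Omega(n_d)$.

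To prove the rank claim, reveal the points one at a time and set $W_j=\operatorname{span}\{v(x_1),\ldots,v(x_j)\}$, where $v(x)=\bigl(\prod_{i\in T}x_i\bigr)_{|T|\le d}$ is the $j$-th row of $M$. The rank can miss $q$ only if $v(x_{j+1})\in W_j$ for some $j<q$ at which $v(x_1),\ldots,v(x_j)$ are independent; then $\dim W_j=j$, and $v(x_{j+1})\in W_j$ iff $x_{j+1}$ lies in the common zero set $Z(V_j)$ of the space $V_j$ of all degree-$\le d$ polynomials vanishing at $x_1,\ldots,x_j$, a subspace of $\mathrm{Pol}_d$ of codimension exactly $j<n_d/2$. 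The key ingredient is an extremal bound on restricted Reed--Muller codes: for every $Z\subseteq\mathbb{Z}_2^n$,
\[
\dim\{\,p|_Z : p\in\mathrm{Pol}_d\,\}\ \ge\ \sum_{i=0}^{d}\binom{\lfloor\log_2|Z|\rfloor}{i},
\]
with equality for affine subcubes. Since $V_j\subseteq\{p\in\mathrm{Pol}_d:p|_{Z(V_j)}=0\}$, the left side for $Z=Z(V_j)$ is at most $\operatorname{codim}V_j<n_d/2=\tfrac12\sum_{i\le d}\binom ni$, and a one-line binomial estimate (if $\lfloor\log_2|Z|\rfloor>n(1-\tfrac1{4d})$ then the right side already exceeds $\tfrac12\sum_{i\le d}\binom ni$) then forces $|Z(V_j)|\le 2^{\,n-n/(4d)}$. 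Hence $\prob[v(x_{j+1})\in W_j\mid\text{independence}]=|Z(V_j)|/2^n\le 2^{-n/(4d)}$ for every $j<q$, and a union bound gives $\prob[\operatorname{rank}M<q]\le q\,2^{-n/(4d)}=\lfloor n_d/2\rfloor\,2^{-n/(4d)}=o(1)$, using $\log_2 n_d=O(d\log n)$ and $d\le n^{1/3}$, so that $n/(4d)=\omega(d\log n)$.

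The one nonroutine step is the displayed inequality; everything else is a union bound or a direct appeal to tools already in the paper. I expect to prove it by coordinate compression: split $Z$ along a well-chosen coordinate $x_n$ into $Z_0,Z_1$ with $|Z_0|\ge|Z_1|\ge1$, observe that restricting a polynomial $p=p_0+x_n p_1$ to the two halves gives a surjection onto $\{p_0|_{Z_0'}:\deg p_0\le d\}$ whose kernel contains $\{p_1|_{Z_1'}:\deg p_1\le d-1\}$, and recurse in $n$ on the first piece and in $d$ on the second, choosing the splitting coordinate so as to avoid the degenerate $1\mid(|Z|-1)$ split; an affine subcube shows tightness. (Alternatively the inequality can be read off the known generalized Hamming weights of Reed--Muller codes.) This bound is where I expect the real work to be.
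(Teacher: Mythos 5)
Your proposal is correct and follows essentially the same route as the paper: the upper bound via $|\mathrm{Pol}_d|=2^{n_d}$ and proper learning is identical, and the lower bound likewise reduces to showing that the $d$-evaluations of $\Theta(n_d)$ random points are linearly independent with high probability, controlling the probability that a new evaluation falls in the current span by bounding the common zero set of a large space of degree-$\le d$ polynomials. The one ``nonroutine'' inequality you isolate (the dimension of restricted Reed--Muller codes versus $\log_2|Z|$) is exactly the contrapositive of Lemma~4 of~\cite{BHL12}, which the paper simply cites rather than reproves, so no new idea is needed there.
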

\begin{proof}
The number of polynomials of degree $d$ is $2^{\binom{n}{\le d}}$,
hence by Fact~\ref{fact:simple-PAC-UB} and Proposition~\ref{prop:passive-testing-upper-bound-by-learning},
passive testing can be done using $O(\binom{n}{\le d})$ queries.
We now show a lower bound of $\Omega(\binom{n}{\le d})$ queries.

Let $x_{1,}x_{2,}\ldots,x_{q}\in\mathbb{Z}_{2}^{n}$ be the set of
$q=\left\lfloor \binom{n}{\le d}/2e\right\rfloor $ random queries
performed by a passive tester. For $i=1,\ldots,q$, define $y_{i}\in\mathbb{Z}_{2}^{\binom{n}{\le d}}$
to be the $d$-evaluation of $x_{i}$, that is, the evaluations of
all possible monomials of degree at most $d$ at $x_{i}$. It suffices
to show that $\text{\ensuremath{\left\{  y_{i}\right\} } }_{i=1}^{q}$
are most likely linearly independent to conclude that any testing
algorithm performs badly; indeed, since the $\binom{n}{\le d}$ monomials
serve as a basis to $\mathrm{Pol}_{d}$, $\text{\ensuremath{\left\{  y_{i}\right\} } }_{i=1}^{q}$
being linearly independent implies that every possible output $\left(f(x_{1}),\ldots,f(x_{q})\right)\in\mathbb{Z}_{2}^{q}$
is equally likely when choosing a random $f\in\mathrm{Pol}_{d}$,
so the tester sees a uniform distribution and therefore cannot decide.

In order to show that, with high probability, these vectors are linearly
independent, we bound the probability that $y_{i}$ is spanned by
$y_{1},\ldots,y_{i-1}$, and then apply the union bound to show that
none of these events is likely to occur. Let $V_{i}=\mathrm{span}\left\{ y_{1},\ldots,y_{i-1}\right\} $
be the linear space spanned by the first $i-1$ vectors. By Lemma~4
from~\cite{BHL12}, since 
\[
\dim V_{i}\le i-1<q\le\binom{n}{\le d}/2e\le\sum_{i=0}^{d}\binom{\left\lceil n(1-1/d)\right\rceil }{i},
\]
no more than $2^{\left\lceil n(1-1/d)\right\rceil }$ $d$-evaluations
of vectors from $\mathbb{Z}_{2}^{n}$ reside in $V_{i}$. Thus, $\prob\left[y_{i}\in V_{i}\right]\le2^{-\left\lceil n/d\right\rceil }$
and, by the union bound, $\prob\left[\exists i:y_{i}\in V_{i}\right]\le q\cdot2^{-\left\lceil n/d\right\rceil }=o(1)$
for $d\le n^{1/3}$.
\end{proof}
We now focus on linear functions, for which we determine the passive
query complexity up to an additive constant term. We slightly abuse
notation by using $\mathrm{Pol}_{1}$ to denote the family of linear
functions, even though degree $1$ polynomials include both linear
and affine functions.
\begin{prop}
\label{prop:linear-passive-UB-and-LB}$Q^{p}(\mathrm{Pol}_{1})=n+\Theta(1)$.\end{prop}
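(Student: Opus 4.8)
The plan is to prove the two inequalities $Q^p(\mathrm{Pol}_1)\ge n$ and $Q^p(\mathrm{Pol}_1)\le n+O(1)$ separately, in the spirit of the low-degree polynomial proof above specialized to $d=1$, but now tracking the additive constant carefully.

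For the lower bound, I would argue exactly as in the proof of Theorem~\ref{thm:passive-polynoms-UB-and-LB} but at the correct threshold. A passive tester making $q$ random queries $x_1,\ldots,x_q\in\mathbb{Z}_2^n$ sees them as vectors in $\mathbb{Z}_2^n$ (the $1$-evaluation, without the constant coordinate since we are in the linear, not affine, case). If these $q$ vectors are linearly independent, then every output pattern $(f(x_1),\ldots,f(x_q))$ is realizable by some linear $f$, so no function can ever be rejected and the tester fails on functions far from linear. Thus the tester needs $q$ large enough that $x_1,\ldots,x_q$ are likely to be \emph{dependent}. A uniformly random sequence of $q=n-1$ vectors in $\mathbb{Z}_2^n$ is independent with constant probability (the probability that $n-1$ random vectors are independent is $\prod_{j=0}^{n-2}(1-2^{j-n})=\Omega(1)$), so $q=n-1$ queries do not suffice and $Q^p(\mathrm{Pol}_1)\ge n$. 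One subtlety to address is that the tester is randomized and may be adaptive only in its internal coin tosses (queries are fixed random points), so I would note as in Proposition~\ref{prop:refined-PAC-UB-LB} that this gives no extra power against the worst case; in fact here the argument is even cleaner because on the independent-columns event \emph{every} response is consistent with $\mathrm{Pol}_1$.

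For the upper bound, the idea is that $n+O(1)$ random queries suffice: with probability bounded below by a constant, $n$ of the sampled points $x_1,\ldots,x_n$ form a basis of $\mathbb{Z}_2^n$, in which case the linear function (if the input is linear) is completely determined by its values on them, and the remaining $O(1)$ queries are used to check consistency — i.e., to reject inputs that are $\epsilon$-far from linear. Concretely: from the $n$ basis points and their labels, reconstruct the unique linear candidate $g$; then query $O(1/\epsilon)=O(1)$ further random points and reject if $f$ disagrees with $g$ on any of them. If $f$ is linear this always accepts; if $f$ is $\epsilon$-far from every linear function then $f$ disagrees with $g$ on an $\epsilon$-fraction, so a single random check catches it with probability $\epsilon$ and $O(1/\epsilon)$ checks boost this past $2/3$. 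The one gap to patch is that a uniformly random set of $n$ points is a basis only with constant probability $c=\prod_{j\ge1}(1-2^{-j})\approx 0.288$, not with probability $1$; I would handle this by taking a slightly larger sample, say the first $n+t$ points, and arguing that among them we find a basis with probability $1-o(1)$ (or at least $\ge 0.9$) for a suitable constant $t$, since the event that $n+t$ random vectors have rank $<n$ has probability at most $2^{-t}$ by a standard estimate. Folding the success probabilities together with the $O(1/\epsilon)$ consistency checks and a union bound gives a passive $\epsilon$-tester using $n+O(1)$ queries for constant $\epsilon$.

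The main obstacle, such as it is, is bookkeeping the constants rather than any conceptual difficulty: one must make sure the "find a basis among the first $n+O(1)$ samples" event, the "consistency check rejects far functions" event, and the overall $2/3$ success probability all fit together, and that the lower bound constant genuinely forces $q\ge n$ and not merely $q\ge n-O(1)$. The cleanest packaging is probably to state the lower bound as "$n-1$ queries fail because random columns are independent with probability $\Omega(1)$" and the upper bound as "$n+t$ queries for an absolute constant $t=t(\epsilon)$ work", which together pin down $Q^p(\mathrm{Pol}_1)=n+\Theta(1)$; I would also remark that the same reasoning gives $Q^\ell(\mathrm{Pol}_1)=n+\Theta(1)$, matching the table, since proper learning of a linear function is exactly the "recover $g$ from a basis" step.
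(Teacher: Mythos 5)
Your overall strategy is the same as the paper's: for the lower bound, observe that a linearly independent random query set carries no information because every response pattern is realizable by a linear function; for the upper bound, take $n+O(1)$ samples, extract a basis with high probability, learn the unique consistent linear function, and verify it with $O(1/\epsilon)$ extra consistency checks. The upper bound half of your argument is sound and matches the paper's essentially verbatim (including the $2^{-t}$ bound on rank deficiency among $n+t$ random vectors).

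There is, however, a genuine gap in your sharpened lower bound claim that $q=n-1$ queries fail, hence $Q^p(\mathrm{Pol}_1)\ge n$. The inference ``the queries are independent with constant probability, therefore the tester fails'' is not valid for an arbitrary constant. The indistinguishability argument runs as follows: conditioned on the independence event $E$, the tester's view has the same distribution whether the input is a uniformly random linear function or a uniformly random (hence far-from-linear) function; writing $p=\prob[E]$ and combining the completeness and soundness requirements gives $p+2(1-p)\ge 4/3$, i.e.\ the tester can only exist if $p\le 2/3$. So you need $p>2/3$ to derive a contradiction, not merely $p=\Omega(1)$. For $q=n-1$ your own product $\prod_{j=0}^{n-2}\left(1-2^{j-n}\right)=\prod_{m=2}^{n}\left(1-2^{-m}\right)\to 2\prod_{m\ge1}\left(1-2^{-m}\right)\approx 0.578$, which is \emph{below} $2/3$, so the argument as stated does not rule out an $(n-1)$-query tester. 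The paper sidesteps this by stopping at $q\le n-2$, where the probability of any dependency is at most $2^{q-n}\le 1/4$, giving $p\ge 3/4>2/3$ and hence $Q^p(\mathrm{Pol}_1)\ge n-1$. This weaker bound is all that is needed, since the statement $n+\Theta(1)$ absorbs an additive constant on either side; but if you want to keep the claim $q\ge n$ you must supply a different argument, as the one you give breaks exactly at $q=n-1$.
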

\begin{proof}
As in the proof of Theorem~\ref{thm:passive-polynoms-UB-and-LB},
a linearly independent query set is useless for the testing algorithm.
Let $x_{1,}x_{2,}\ldots,x_{q}$ be a sequence of $q\le n$ queries
and define $X_{i}$ to be the event that $x_{i}\in\mathrm{span}\left\{ x_{1},\ldots,x_{i-1}\right\} $.
The probability that some linear dependency exists among the $q$
queries is

\[
\prob\left[\bigcup_{i=1}^{q}X_{i}\right]
=\prob\left[\bigcup_{i=1}^{q}\left(X_{i}\setminus
\bigcup_{j=1}^{i-1}X_{j}\right)\right]
=\sum_{i=1}^{q}\prob\left[X_{i}\setminus\bigcup_{j=1}^{i-1}X_{j}\right]
%NA
\leq \sum_{i=1}^{q}2^{i-1-n}=\frac{2^{q}-1}{2^{n}}.
\]
For $q>n$, surely any set of $q$ queries is linearly dependent.

Given the computation above, a set of $q\leq n-2$ queries is expected
to be linearly dependent with probability smaller than $1/4$. On
the other hand, $n+O(1)$ queries are very likely to provide a basis
for $\mathbb{Z}_{2}^{n}$ and $O\left(1\right)$ linear dependencies,
so we can learn the unique linear function consistent with the basis
and then verify it; if the function is $\epsilon$-far from linear,
each additional query is inconsistent with the learned function with
a constant probability.
\end{proof}
Active testing allows us to reduce the query complexity by a logarithmic
factor, in comparison to passive testing. We first prove the following
lemma, which is an extension of the analysis of the BLR test provided
by Bellare et al.~\cite{BCHKS96}.
\begin{lem}
\label{lem:BLR-k}Given a function $f:\mathbb{Z}_{2}^{n}\to\mathbb{Z}_{2}$
that is $\epsilon$-far from being linear,
\[
\prob_{x_{1},x_{2},\ldots,x_{2k}\in\mathbb{Z}_{2}^{n}}\left[f(x_{1})+\cdots+f(x_{2k})=f(x_{1}+\cdots+x_{2k})\right]\leq\tfrac{1}{2}+\tfrac{1}{2}(1-2\epsilon)^{2k-1}\ 
\]
\end{lem}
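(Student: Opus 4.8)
The plan is to transport the Fourier-analytic proof of the BLR test due to Bellare et al.~\cite{BCHKS96} to this $2k$-query variant. Write $\chi=(-1)^{f}$ for the $\pm1$-valued version of $f$. The element $f(x_{1})+\cdots+f(x_{2k})+f(x_{1}+\cdots+x_{2k})$ vanishes in $\mathbb{Z}_{2}$ precisely when the product $\chi(x_{1})\cdots\chi(x_{2k})\,\chi(x_{1}+\cdots+x_{2k})$ equals $1$, and that product is $\pm1$-valued, so the probability in the lemma equals
\[
\tfrac{1}{2}+\tfrac{1}{2}\,\E_{x_{1},\ldots,x_{2k}}\!\left[\chi(x_{1})\cdots\chi(x_{2k})\,\chi(x_{1}+\cdots+x_{2k})\right].
\]
It therefore suffices to bound the expectation on the right by $(1-2\epsilon)^{2k-1}$.

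The first real step is to expand all $2k+1$ factors in the Fourier basis: $\chi(x)=\sum_{\alpha}\hat{\chi}(\alpha)(-1)^{\alpha\cdot x}$, and likewise $\chi(x_{1}+\cdots+x_{2k})=\sum_{\beta}\hat{\chi}(\beta)\prod_{i=1}^{2k}(-1)^{\beta\cdot x_{i}}$. Multiplying out and taking the expectation over the independent uniform $x_{1},\ldots,x_{2k}$, the orthogonality identity $\E_{x}[(-1)^{\gamma\cdot x}]=\mathbf{1}[\gamma=0]$ forces each $\alpha_{i}$ to equal $\beta$ in every surviving term, collapsing the whole sum to $\sum_{\beta}\hat{\chi}(\beta)^{2k+1}$. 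Hence the acceptance probability is \emph{exactly} $\tfrac{1}{2}+\tfrac{1}{2}\sum_{\beta}\hat{\chi}(\beta)^{2k+1}$, and the lemma reduces to the bound $\sum_{\beta}\hat{\chi}(\beta)^{2k+1}\le(1-2\epsilon)^{2k-1}$.

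To close this, recall that $\hat{\chi}(\alpha)=\E_{x}[\chi(x)(-1)^{\alpha\cdot x}]=1-2\,\prob_{x}[f(x)\neq\alpha\cdot x]$; since $f$ is $\epsilon$-far from every linear function $x\mapsto\alpha\cdot x$, we get $\hat{\chi}(\alpha)\le1-2\epsilon$ for all $\alpha$. Now split $\sum_{\beta}\hat{\chi}(\beta)^{2k+1}$ according to the sign of $\hat{\chi}(\beta)$: the terms with $\hat{\chi}(\beta)\le0$ contribute at most $0$ because the exponent $2k+1$ is odd, so they may be discarded, while for the terms with $\hat{\chi}(\beta)>0$ I would write $\hat{\chi}(\beta)^{2k+1}=\hat{\chi}(\beta)^{2k-1}\hat{\chi}(\beta)^{2}\le(1-2\epsilon)^{2k-1}\hat{\chi}(\beta)^{2}$ and sum, invoking Parseval $\sum_{\beta}\hat{\chi}(\beta)^{2}=1$ (and $0\le\epsilon\le\tfrac{1}{2}$, so $(1-2\epsilon)^{2k-1}\ge0$). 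This yields the required inequality and hence the lemma.

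The one point that needs care is exactly this sign bookkeeping. In the classical $3$-query analysis one may use $|\hat{\chi}(\alpha)|\le1-2\epsilon$, whereas here only the one-sided bound $\hat{\chi}(\alpha)\le1-2\epsilon$ is available — a function can be very close to an affine (non-linear) function, making a Fourier coefficient as negative as $-1$. The saving grace is that the exponent $2k+1$ is odd, so those negative coefficients only push the product down; the one-sided bound is precisely what the estimate needs, and everything else is the routine Fourier computation sketched above.
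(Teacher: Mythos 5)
Your proposal is correct and follows essentially the same route as the paper: express the acceptance probability as $\tfrac{1}{2}+\tfrac{1}{2}\sum_{\beta}\hat{\chi}(\beta)^{2k+1}$ via the Fourier expansion, bound each coefficient by the one-sided estimate $\hat{\chi}(\beta)\le1-2\epsilon$ coming from $\epsilon$-farness, factor out $(1-2\epsilon)^{2k-1}$, and finish with Parseval. Your explicit sign bookkeeping (discarding the terms with $\hat{\chi}(\beta)\le0$ thanks to the odd exponent) is just a more careful writing of the paper's step $\sum_{S}\hat{f}(S)^{2k+1}\le\bigl(\max_{S}\hat{f}(S)^{2k-1}\bigr)\sum_{S}\hat{f}(S)^{2}$, and correctly identifies why only the one-sided bound is needed.
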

\begin{proof}
Since $f$ is $\epsilon$-far from being linear, when writing it in
the Fourier basis $f\left(y\right)=\sum_{S\subseteq\left[n\right]}\hat{f}\left(S\right)\sum_{i\in S}y_{i}$
all of its Fourier coefficients $\{\hat{f}(S):S\subseteq[n]\}$ are
bounded from above by $1-2\epsilon$. Similar to the analysis for
the case $k=1$, the success probability of this test is:
\[
\tfrac{1}{2}+\tfrac{1}{2}\sum_{S\subseteq[n]}\hat{f}(S)^{2k+1}\le\tfrac{1}{2}+\tfrac{1}{2}\left(\max_{S\subseteq[n]}\hat{f}(S)^{2k-1}\right)\sum_{S\subseteq[n]}\hat{f}(S)^{2}=\tfrac{1}{2}+\tfrac{1}{2}\max_{S\subseteq[n]}\hat{f}(S)^{2k-1}\le\tfrac{1}{2}+\tfrac{1}{2}(1-2\epsilon)^{2k-1},
\]
where the middle equality holds by Parseval's theorem.
\end{proof}
Unlike the BLR test, which uses the case $k=1$, in the context of
active testing we need $k$ to be almost linear in $n$, hence little
amplification is necessary.
\begin{thm*}[Restatement of Theorem~\ref{thm:active-linear-UB-and-LB}]
$Q^{a}(\mathrm{Pol}_{1},u)=\Theta(n/\log u)$, for $u\ge n^{2}$.\end{thm*}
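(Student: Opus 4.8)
Throughout, call a set $V\subseteq\mathbb{Z}_2^n$ \emph{zero-sum} if $\sum_{v\in V}v=0$, and fix an odd integer $s=\Theta(n/\log u)$ whose exact choice is made below (the hypothesis $u\ge n^2$ is what makes $\log_2(u/s)=\Theta(\log u)$, so this is the right order). For the lower bound I would invoke Lemma~\ref{lem:active-testing-dimension} with $\pi$ the uniform distribution on the $2^n$ linear functions $x\mapsto\langle a,x\rangle$; this is a nontrivial property, since a Chernoff bound together with a union bound over the $2^n$ linear functions shows a uniformly random Boolean function is $0.01$-far from all of them with probability $1-o(1)$. For a query set $S$ of size $q$ the linear map $a\mapsto(\langle a,x\rangle)_{x\in S}$ gives $\pi_S(y)=2^{-q}$ when the vectors of $S$ are linearly independent, and $\pi_S(y)\ge 2^{-(q-1)}>\tfrac65 2^{-q}$ for some $y$ otherwise, so the hypothesis of Lemma~\ref{lem:active-testing-dimension} becomes: a uniformly random $u$-subset $U$ contains no zero-sum subset of size $\le q$, with probability $\ge\tfrac34$. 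Taking $q=\lfloor n/(3\log_2 u)\rfloor$, there are at most $\binom{2^n}{j-1}$ zero-sum $j$-subsets of $\mathbb{Z}_2^n$ (the first $j-1$ elements force the last), each contained in $U$ with probability $\le(2u/2^n)^j$, so the failure probability is at most $\sum_{j=2}^q(2u)^j 2^{-n}\le q\,2^{q}u^{q}2^{-n}\le q\,2^{-n/3}=o(1)$ (using $q\log_2 u\le n/3$ and $q\le n/3$). Hence $Q^a(\mathrm{Pol}_1,u)\ge q=\Omega(n/\log u)$.

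For the upper bound, the tester, given $U$, picks a constant number $t$ of independent uniformly random zero-sum $s$-subsets $V_1,\dots,V_t$ of $U$ (accepting outright in the unlikely event that no zero-sum $s$-subset exists), queries $f$ on $\bigcup_i V_i$ --- at most $ts=\Theta(n/\log u)$ queries --- and accepts iff $\sum_{v\in V_i}f(v)=0$ for every $i$. Completeness is immediate: if $f$ is linear then $\sum_{v\in V_i}f(v)=f\!\bigl(\sum_{v\in V_i}v\bigr)=f(0)=0$. For soundness, fix $f$ that is $\epsilon$-far from linear and let $T_U$ (resp.\ $A_U$) count the zero-sum $s$-subsets of $U$ (resp.\ those with $\sum_{v\in V}f(v)=0$). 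Choose $s=\Theta(n/\log u)$ with a large enough implied constant so that $\E[T_U]$, which is of order $\binom{u}{s}2^{-n}$, is at least, say, $2^{n/2}$; a second-moment estimate (or, passing to an i.i.d.\ sample of size $u$ at negligible cost, the Talagrand/Erd\H{o}s--Rado argument behind Lemma~\ref{lem:main}) then gives $T_U\ge\tfrac12\E[T_U]>0$ with probability $1-o(1)$. Since $U$ is a uniform $u$-subset, every $s$-subset of $\mathbb{Z}_2^n$ lies in $U$ with the same probability, so $\E[A_U]/\E[T_U]$ is exactly the fraction of zero-sum $s$-subsets of $\mathbb{Z}_2^n$ on which $f$ has even parity; conditioning on the first $s-1$ points, this is precisely the quantity bounded by Lemma~\ref{lem:BLR-k} (with $2k=s-1$), hence at most $\tfrac12+\tfrac12(1-2\epsilon)^{s-2}$. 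A matching second-moment bound for $A_U$ when $\E[A_U]$ is comparable to $\E[T_U]$, and Markov's inequality when it is much smaller, yield $A_U/T_U\le\beta$ with probability $1-o(1)$ for a constant $\beta<1$ (which one can push towards $\tfrac12$ by enlarging $s$ by a constant factor, at the cost of a larger but still constant $t$). Thus on an $\epsilon$-far $f$ the acceptance probability is $\le\E_U[(A_U/T_U)^t]+o(1)\le\beta^t+o(1)\le\tfrac13$, giving $Q^a(\mathrm{Pol}_1,u)=O(n/\log u)$ and, with the lower bound, the theorem.

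The main obstacle is this soundness analysis: the BLR-type parity check must be run on a uniformly random zero-sum subset of the \emph{random} sample set $U$ rather than on a genuinely uniform zero-sum tuple of $\mathbb{Z}_2^n$, so one has to show that the zero-sum $s$-subsets of $U$ are ``spread out'' --- concretely, that both $T_U$ and $A_U$ concentrate tightly around their expectations. Making these concentration statements quantitative with $o(1)$ error, while keeping $s$ (and hence the query count) at $\Theta(n/\log u)$ across the whole range $n^2\le u\le 2^n$ and tuning the number of repetitions $t$ so that the acceptance probability falls below $\tfrac13$, is the delicate part; the counting/concentration machinery developed for Lemma~\ref{lem:main} is exactly what this calls for.
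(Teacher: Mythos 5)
Your proposal is correct in outline and follows essentially the same route as the paper: the lower bound via showing a random $u$-sample has no linear dependency (zero-sum subset) of size $O(n/\log u)$, so that small query sets are linearly independent and reveal nothing about a random linear function, and the upper bound via the amplified BLR test of Lemma~\ref{lem:BLR-k} applied to zero-sum subsets of $U$ of size $\Theta(n/\log u)$, repeated a constant number of times. The soundness subtlety you flag at the end --- that the test tuples are zero-sum subsets of the random sample $U$ rather than uniform zero-sum tuples of $\mathbb{Z}_2^n$, so one needs concentration of the counts $T_U$ and $A_U$ --- is real, but the paper's own proof treats it equally informally (it simply asserts the Lemma~\ref{lem:BLR-k} bound for ``a single such test''), so your sketch is no less complete than the original.
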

\begin{proof}
As done in the previous proof, we bound the number of queries from
below by showing that one is not expected to find a linear dependency
of size smaller than $n/(2\log u)$ among a set of $u$ samples. The
expected number of linear dependencies of size at most $q$ is at most
\[
\sum_{i=0}^q {u \choose i}2^{-n}\leq
u^{q}2^{-n}=2^{q\log u-n}\leq2^{-n/2}\ ,
\]
assuming $q\leq n/(2\log u)$. By Markov's inequality, the probability
of having such a linear dependency is $o\left(1\right)$ and therefore
$\Omega(\frac{n}{\log u})$ queries are needed.

Given an input function that is $\epsilon$-far from being linear,
we use the test of Lemma~~\ref{lem:BLR-k} to identify this. Fix
$q=4 \left\lceil n/\log u\right\rceil $. Given a sample $U$ of $u$
vectors, it contains $\binom{u}{q/2}>2^n$ subsets of size
$q/2$. By the pigeonhole principle two of these sets have the same
sum, hence there is a linear dependency of length at most $q$.
On the other hand, by the previous computation, with high
probability there is no linear dependency of size less than
$n/(2 \log u)=q/8$ hence the length exceeds $q/8$.
By Lemma~\ref{lem:BLR-k} the
probability that $f$ passes a single such test is at most 
$\tfrac{1}{2}+\tfrac{1}{2}(1-2\epsilon)^{q/8-1}<
\tfrac{1}{2}+\tfrac{1}{2}(1-2\epsilon)^{n/2\log u}$.
Since $\epsilon$ is constant, for large enough $n$ this is smaller
than $0.9$, thus repeating the test a constant number of times reduces
the probability of $f$ passing all of them to less than $1/3$ (obviously
we never reject a linear function).
\end{proof}

\section{\label{sec:Discussion}Discussion}

Throughout this work we have demonstrated new bounds for the number
of queries needed for active and passive testing of several properties.
In particular, we now know the amount of queries needed for testing
$k$-linear functions in these new models. 

A practical aspect of property testing algorithms that we did not
cover is the actual running time, rather than just the number of queries
performed, which was the only concern in this work. Some of the algorithms
we presented, especially those based on proper learning, have an exponential
run-time complexity and it would be interesting to see whether active
or passive testing can be done while maintaining polynomial running
time.

Quite a few of the passive testing algorithms we provided can in fact
be made tolerant; that is, they can be modified to accept functions
close to satisfying the property while rejecting functions far from
satisfying it (with some gap in between). For simplicity we did not
explicitly show that. Such modifications usually do not have an effect
on the asymptotic query complexity.

While Section~\ref{sec:Low-degree-polynomials} provides a tight
analysis of active and passive testing of linear functions, for low
degree polynomials our analysis is only tight for passive testing.
Extrapolating based on the behavior of linear functions, it seems
natural to expect that the query complexity of active testing of low
degree polynomials is asymptotically lower than passive testing, perhaps
by a polylogarithmic factor. This question remains open at the moment.

Finally we mention that Lemma~\ref{lem:main}, used in the proof
of Theorem~\ref{thm:k-linear-active-LB}, can be used in the study
of a seemingly unrelated problem of exhibiting a very sharp cutoff
phenomenon in the mixing time of random walks in random (dense) Cayley
graphs of abelian groups. Indeed, the lemma implies that for any abelian
group $G$ of order $N$, and for $(\log N)^{1/3}\le k\le(\log N)^{1/2-\delta}$,
if we choose $d\approx N^{1/(k-1)}$ random elements of $G$, then
a random walk of length $k-1$ in the resulting Cayley graph of $G$
is far from being mixing (simply because we cannot reach most of the
elements at all) while a random walk of length $k$ is already mixing.
While it is more interesting to study this problem for much sparser
random Cayley graphs (see~\cite{LS10} for some related results),
even the above statement for the dense case is interesting.

\end{document}